\newtheorem{theorem}{Theorem}[section]
\newtheorem{lemma}[theorem]{Lemma}
\theoremstyle{definition}
\newtheorem{definition}[theorem]{Definition}
\newcommand{\OPT}{{\mathsf{OPT}\xspace}}
\newcommand{\SW}{{\mathsf{SW}\xspace}}
\newcommand{\cM}{\mathcal{M}}
\DeclareMathOperator*{\argmax}{arg\,max}
\newcommand{\specialcell}[2][c]{%
  \begin{tabular}[#1]{@{}c@{}}#2\end{tabular}}
\begin{document}

\title{\bf The Price of Fairness for Indivisible Goods\thanks{
A preliminary version of the paper appeared in Proceedings
of the 28th International Joint Conference on Artificial Intelligence \citep{BeiLuMa19}.
The third author is now at Google Research.
Most of the work was done while the fourth author was at the University of Oxford.
}}

\author[1]{Xiaohui Bei}
\author[1]{Xinhang Lu}
\author[2]{Pasin Manurangsi}
\author[3]{Warut Suksompong}

\affil[1]{School of Physical and Mathematical Sciences, Nanyang Technological University}
\affil[2]{Department of Electrical Engineering and Computer Sciences, UC Berkeley}
\affil[3]{School of Computing, National University of Singapore}

\date{}

\maketitle

\begin{abstract}
We investigate the efficiency of fair allocations of indivisible goods using the well-studied \emph{price of fairness} concept. Previous work has focused on classical fairness notions such as envy-freeness, proportionality, and equitability. However, these notions cannot always be satisfied for indivisible goods, leading to certain instances being ignored in the analysis. In this paper, we focus instead on notions with guaranteed existence, including envy-freeness up to one good (EF1), balancedness, maximum Nash welfare (MNW), and leximin. We also introduce the concept of \emph{strong price of fairness}, which captures the efficiency loss in the worst fair allocation as opposed to that in the best fair allocation as in the price of fairness. We mostly provide tight or asymptotically tight bounds on the worst-case efficiency loss for allocations satisfying these notions, for both the price of fairness and the strong price of fairness.
\end{abstract}

\section{Introduction}

The allocation of scarce resources among interested agents is a problem that arises frequently and plays a major role in our society. We often want to ensure that the allocation that we select is \emph{fair} to the agents---the literature of \emph{fair division}, which dates back to the design of cake-cutting algorithms over half a century ago \citep{Steinhaus48,DubinsSp61}, provides several ways of defining what fair means. 
For example, an allocation is \emph{envy-free} if it does not generate envy between any pair of agents, \emph{proportional} if it gives every agent $1/n$ of the agent's utility for the whole set of resources (here $n$ denotes the number of agents), and \emph{equitable} if every agent receives the same utility. 
An issue orthogonal to fairness is \emph{efficiency}, or \emph{social welfare}, which refers to the total happiness of the agents. A fundamental question is therefore how much efficiency we might lose if we want our allocation to be fair.

This question was first addressed independently by \citet{BertsimasFaTr11} and \citet{CaragiannisKaKa12}, who introduced the \emph{price of fairness} concept to capture the efficiency loss due to fairness constraints. In particular, for any fairness notion and any given resource allocation instance with additive utilities, Caragiannis et al.~defined the price of fairness of the instance to be the ratio between the maximum social welfare over all allocations and the maximum social welfare over allocations that are fair according to the notion. The overall price of fairness for this notion is then defined as the largest price of fairness across all instances. Caragiannis et al.~considered the three aforementioned fairness notions and presented a series of results on the price of fairness with respect to these notions; they assumed that the agents have additive utilities and each agent has utility $1$ for the entire set of resources. As an example, they showed that for the allocation of indivisible goods among $n$ agents, the price of proportionality is $n-1+1/n$, meaning that the efficiency of the best proportional allocation can be a linear factor away from that of the best allocation overall.

Caragiannis et al.'s work sheds light on the trade-off between efficiency and fairness in the allocation of both divisible and indivisible resources. However, a significant limitation of their study is that while an allocation satisfying each of the three fairness notions always exists when goods are divisible, this is not the case for indivisible goods. Indeed, none of the notions can be satisfied in the simple instance with (at least) two agents and a single good to be allocated. Caragiannis et al.~circumvented this issue by simply ignoring instances in which the fairness notion in question cannot be satisfied. As a result, their price of fairness analysis, which is meant to capture the worst-case efficiency loss, fails to cover certain scenarios that may arise in practice.\footnote{From the above example, one may think that such scenarios are rare exceptions. However, for envy-freeness, these scenarios are in fact common if the number of goods is not too large compared to the number of agents \citep{DickersonGoKa14,ManurangsiSu20}.} In addition, the fact that certain instances are not taken into account in the price of fairness have seemingly contradictory consequences. For example, since envy-free allocations are always proportional when utilities are additive, it may appear at first glance that the price of envy-freeness must be at least as high as the price of proportionality. This is not necessarily the case, however, because there are instances that admit proportional but no envy-free allocations.\footnote{Indeed, the instance that Caragiannis et al.~used to show that the price of proportionality is at least $n-1+1/n$ admits no envy-free allocation. Thus, it is still possible that the price of envy-freeness is lower than the price of proportionality for indivisible goods.}

To address these limitations, in this paper we study the price of fairness for indivisible goods with respect to fairness notions that can be satisfied in every instance. Among other notions, we consider envy-freeness up to one good (EF1), balancedness, maximum Nash welfare (MNW), maximum egalitarian welfare (MEW), and leximin.\footnote{See Section~\ref{sec:prelim} for the formal definitions of these notions.} 
For example, in an EF1 allocation, an agent may envy another agent, but in such a case there must exist a good in the second agent's bundle such that the envy disappears upon removing the good.
An MNW allocation maximizes the product of the utilities that the agents receive, while an MEW allocation maximizes the minimum among these utilities.
In addition to deriving bounds on the price of fairness for these notions, we also introduce the concept of \emph{strong price of fairness}, which captures the efficiency loss in the worst fair allocation as opposed to that in the best fair allocation. 
The strong price of fairness is relevant in settings where one is guaranteed an allocation satisfying some fairness notion but has no control over the particular allocation---for instance, we may be participating in an allocation exercise using an algorithm that guarantees EF1 or MNW, and wonder whether that fairness guarantee comes with any assurance on the social welfare.
Indeed, certain fair division algorithms such as the \emph{envy cycle elimination algorithm} \citep{LiptonMaMo04} may output EF1 allocations with low efficiency.\footnote{See the example in Theorem~\ref{thm:EF1-EFX-PoA}.}
The relationship between the price of fairness and the strong price of fairness is akin to that between the price of stability and the price of anarchy for equilibria. While the strong price of fairness is too demanding to yield any nontrivial guarantee for some fairness notions, as we will see, it does provide meaningful guarantees for other notions. 

\subsection{Our Results}

\begin{table*}[t]
\begin{center}
    \begin{tabular}{ | c || c | c || c | c | }
    \hline
     \textbf{Property $P$} & \multicolumn{2}{ c|| }{\textbf{Price of $P$}} & \multicolumn{2}{ c| }{\textbf{Strong price of $P$}} \\ \hline \hline
      & General $n$ & $n=2$ & General $n$ & $n=2$  \\ \hline
		Envy-freeness up to one good (EF1) & \specialcell{LB: $\Omega(\sqrt{n})$ \\ UB: $O(n)$} & \specialcell{LB: $8/7$ \\ UB: $2/\sqrt{3}$} & $\infty$ & $\infty$   \\ \hline
		Envy-freeness up to any good (EFX) & $-$ & $3/2$ & $-$ & $\infty$   \\ \hline
		Round-robin (RR) & $n$ & $2$ & $n^2$ & $4$   \\ \hline
		Balancedness (BAL) & $\Theta(\sqrt{n})$ & $4/3$ & $\infty$ & $\infty$   \\ \hline
		Maximum Nash welfare (MNW) & $\Theta(n)$ & \specialcell{LB: $27/23$ \\ UB: $5/4$}  & $\Theta(n)$ & \specialcell{LB: $27/23$ \\ UB: $5/4$}   \\ \hline
		Maximum egalitarian welfare (MEW) & $\Theta(n)$ & $3/2$ & $\infty$ for $n\geq 3$ & $3/2$   \\ \hline
		Leximin (LEX) & $\Theta(n)$ & $3/2$ & $\Theta(n)$ & $3/2$   \\ \hline
		Pareto optimality (PO) & $1$ & $1$ & $\Theta(n^2)$ & $3$   \\ \hline
    \end{tabular}
    \caption{Summary of our results. LB denotes lower bound and UB denotes upper bound. We do not consider the (strong) price of EFX for general $n$ because it is not known whether an EFX allocation always exists for $n > 3$. If we allow dependence on the number of goods $m$, we have an upper bound of $O(\sqrt{n}\log(mn))$ on the price of EF1.}
    \label{table:summary}
\end{center}
\end{table*}

The majority of our results can be found in Table~\ref{table:summary}; we highlight a subset of these next. For the price of EF1, we provide a lower bound of $\Omega(\sqrt{n})$ and an upper bound of $O(n)$. We then show that two common methods for obtaining an EF1 allocation---the round-robin algorithm and MNW---have a price of fairness of linear order (for round-robin the price is exactly $n$), implying that these methods cannot be used to improve the upper bound for EF1. 
On the other hand, if we allow dependence on the number of goods $m$, the price of round-robin, and therefore the price of EF1, is $O(\sqrt{n}\log(mn))$---this means that the $\Omega(\sqrt{n})$ lower bound is almost tight unless the number of goods is huge compared to the number of agents. 
Our result illustrates a clear difference between EF1 and envy-freeness, as the price of the latter is $\Theta(n)$ \citep{CaragiannisKaKa12}.
For MNW, MEW, and leximin, we prove an asymptotically tight bound of $\Theta(n)$ on the price of fairness. Moreover, with the exception of EF1 and MNW, we establish exactly tight bounds in the case of two agents for all fairness notions---in particular, the price of EF1 is between $1.14$ and $1.16$, implying that there exists an EF1 allocation whose welfare is close to the optimal welfare.

Our results point to round-robin as a promising allocation method: besides producing an EF1 allocation with high welfare, it is extremely simple and intuitive, and an allocation that it produces is always balanced.\footnote{Moreover, a round-robin allocation is likely to be envy-free and proportional as long as the number of goods is sufficiently larger than the number of agents \citep{ManurangsiSu20-2}.}
Most of our upper bounds naturally give rise to polynomial-time algorithms for computing an allocation satisfying the fairness notion with the guaranteed welfare.
However, there are two notable exceptions:\footnote{In addition to these exceptions, MNW, MEW, and leximin allocations are hard to compute regardless of price of fairness considerations (see, e.g., \citep[footnote~7]{PlautRo20}).} (i) the proof of Theorem~\ref{thm:EFX-PoS-2} requires an agent to partition the goods into two bundles such that her utilities for the bundles are as equal as possible, an NP-hard problem; (ii) the upper bound in Theorem~\ref{thm:round-robin-PoA}, which relies on Lemma~\ref{lem:roundrobin-welfare}, is based on a randomized approach and does not indicate how a desirable round-robin ordering can be efficiently chosen.

On the strong price of fairness front, we show via a simple instance that the strong price of EF1 and balancedness are infinite, meaning that there are arbitrarily bad EF1 and balanced allocations. Nevertheless, a round-robin allocation, which satisfies these two properties, always has welfare within a factor $n^2$ of the optimal allocation, and this factor is exactly tight. For MNW and leximin, the strong price of fairness, like the price of fairness, is of linear order---hence, these two notions provide a better worst-case guarantee than the round-robin algorithm. However, while the price of MEW is also $\Theta(n)$, the strong price of MEW is infinite for $n\geq 3$ (and $3/2$ for $n=2$), meaning that a MEW allocation does not provide any welfare guarantee when there are at least three agents. Finally, we consider Pareto optimality, for which the price of fairness is trivially $1$, and show that the strong price of Pareto optimality is $\Theta(n^2)$. 
This demonstrates that an allocation that is optimal in the Pareto sense may be quite far from optimal with respect to social welfare.


\subsection{Related Work}

As we mentioned earlier, the price of fairness was introduced independently by \citet{BertsimasFaTr11} and \citet{CaragiannisKaKa12}. Bertsimas et al.~studied the concept for divisible goods with respect to fairness notions such as proportional fairness and max-min fairness; in particular, their results on proportional fairness imply that the price of envy-freeness and the price of MNW for divisible goods are both $\Theta(\sqrt{n})$.\footnote{Interestingly, this stands in contrast to our result that the price of MNW for indivisible goods is $\Theta(n)$.} Caragiannis et al. presented a number of bounds for both goods and \emph{chores} (i.e., items that yield negative utility), both when these items are divisible and indivisible.  The price of fairness has subsequently been examined in several other settings, including for \emph{contiguous} allocations of divisible goods \citep{AumannDo15}, indivisible goods \citep{Suksompong19}, and divisible chores \citep{HeydrichVa15}, as well as in the context of machine scheduling \citep{BiloFaFl16} and budget division \citep{MichorzewskiPeSk20}.

Typically, the price of fairness study focuses on quantifying the efficiency loss solely in terms of the number of agents. A notable exception to this is the work of \citet{Kurz14}, who remarked that certain constructions used to establish worst-case bounds for indivisible goods require a large number of goods. As a result, Kurz investigated the dependence of the price of fairness on both the number of agents and the number of goods, and, as we do for the price of round-robin, found that the price indeed improves significantly if we limit the number of goods.

Since envy-freeness and proportionality cannot always be satisfied even in the simplest setting with two agents and one good, a large number of recent papers have focused on relaxations of these notions, which include EF1, EFX, maximin share (MMS), and pairwise maximin share (PMMS) \citep{AmanatidisBiMa18,BiswasBa18,GhodsiHaSe18,CaragiannisKuMo16,OhPrSu19,KyropoulouSuVo20,PlautRo20}.\footnote{See \citep{CaragiannisKuMo16} for the definitions of MMS and PMMS.}
It is known that MMS allocations do not necessarily exist, while the existence question is open for PMMS \citep{KurokawaPrWa18,CaragiannisKuMo16}.
We refer to \citep{Markakis17} and \citep{CaragiannisKuMo16} for further discussion of work on these notions.

After the publication of the initial version of our work, \citet{BarmanBhSh20} devised an algorithm that produces an allocation with social welfare within $O(\sqrt{n})$ of the optimum; together with our result, this implies that the price of EF1 is in fact $\Theta(\sqrt{n})$.
Their algorithm works by starting with an optimal allocation, arranging the goods on a line so that each bundle in this allocation is connected, giving each agent her favorite good from her bundle, and updating the allocation by carefully assigning additional items so as to maintain EF1 and connectivity on the line.
Moreover, their algorithm can be extended to the more general setting where agents have subadditive utilities.

\section{Preliminaries}
\label{sec:prelim}

Denote by $N=\{1,2,\dots,n\}$ the set of agents and $M=\{1,2,\dots,m\}$ the set of goods. Each agent $i$ has a nonnegative utility $u_i(j)$ for each good $j$. The agents' utilities are additive, meaning that $u_i(M')=\sum_{j\in M'}u_i(j)$ for every agent $i$ and subset of goods $M'\subseteq M$. Following \citet{CaragiannisKaKa12}, we normalize the utilities across agents by assuming that $u_i(M)=1$ for all $i$. We refer to a setting with agents, goods, and utility functions as an \emph{instance}.
An \emph{allocation} is a partition of $M$ into bundles $(M_1,\dots,M_n)$ such that agent $i$ receives bundle $M_i$. The \emph{(utilitarian) social welfare} of an allocation $\mathcal{M}$ is defined as $\SW(\mathcal{M}) := \sum_{i=1}^n u_i(M_i)$. The optimal social welfare for an instance $I$, denoted by $\OPT(I)$, is the maximum social welfare over all allocations for this instance.

A \emph{property} $P$ is a function that maps every instance $I$ to a (possibly empty) set of allocations $P(I)$. Every allocation in $P(I)$ is said to satisfy property $P$.

We are now ready to define the price of fairness concepts.

\begin{definition}
For any given property $P$ of allocations and any instance, we define the \emph{price of P} for that instance to be the ratio between the optimal social welfare and the maximum social welfare over allocations satisfying $P$:
\[\text{Price of }P\text{ for instance }I = \frac{\OPT(I)}{\max_{\mathcal{M}\in P(I)}\SW(\mathcal{M})}.\]
The overall \emph{price of $P$} is then defined as the supremum price of fairness across all instances.

Similarly, the \emph{strong price of $P$} for a given instance is the ratio between the optimal social welfare and the \emph{minimum} social welfare over allocations satisfying $P$:
\[\text{Strong price of }P\text{ for instance }I  = \frac{\OPT(I)}{\min_{\mathcal{M}\in P(I)}\SW(\mathcal{M})}.\]
The overall \emph{strong price of $P$} is then defined as the supremum price of fairness across all instances.
\end{definition}

We will only consider properties $P$ such that $P(I)$ is nonempty for every instance $I$, so the (strong) price of fairness is always well-defined. With the exception of Theorem~\ref{thm:round-robin-PoS-m}, we will be interested in the price of fairness as a function of $n$, and assume that $m$ can be arbitrary.

Next, we define the fairness properties that we consider. The first two properties are relaxations of the classical envy-freeness notion.

\begin{definition}[EF1]
An allocation is said to satisfy \emph{envy-freeness up to one good (EF1)} if for every pair of agents $i,i'$, there exists a set $A_{i'}\subseteq M_{i'}$ with $|A_{i'}|\leq 1$ such that $u_i(M_i)\geq u_i(M_{i'}\backslash A_{i'})$.
\end{definition}

\begin{definition}[EFX]
An allocation is said to satisfy \emph{envy-freeness up to any good (EFX)} if for every pair of agents $i,i'$ and every good $g\in M_{i'}$, we have $u_i(M_i)\geq u_i(M_{i'}\backslash\{g\})$.
\end{definition}

It is clear that EFX imposes a stronger requirement than EF1. An EF1 allocation always exists \citep{LiptonMaMo04}, while for EFX the existence question is still unresolved \citep{CaragiannisKuMo16}. As such, we will only consider EFX in the case of two agents, for which existence is guaranteed \citep{PlautRo20}.\footnote{Recently, \citet{ChaudhuryGaMe20} showed that the existence is also guaranteed for three agents.}

The round-robin algorithm, which we describe below, always computes an EF1 allocation (see, e.g., \citep{CaragiannisKuMo16}).

\begin{definition}[RR]
The \emph{round-robin algorithm} works by arranging the agents in some arbitrary order, and letting the next agent in the order choose her favorite good from the remaining goods.\footnote{In case there are ties between goods, we may assume worst-case tie breaking, since it is possible to obtain an instance with infinitesimal difference in welfare and any desired tie-breaking between goods by slightly perturbing the utilities.} An allocation is said to satisfy \emph{round-robin (RR)} if it is the result of applying the algorithm with some ordering of the agents.
\end{definition}

Our next property is balancedness, which means that the goods are spread out among the agents as much as possible. Balancedness and similar cardinality constraints have been considered in recent work \citep{BiswasBa18}. In addition to satisfying EF1, an allocation produced by the round-robin algorithm is also balanced.

\begin{definition}[BAL]
An allocation is said to be \emph{balanced (BAL)}  if $||M_i|-|M_j||\leq 1$ for any $i,j$.
\end{definition}

Next, we define a number of welfare maximizers.

\begin{definition}[MNW]
The \emph{Nash welfare} of an allocation is defined as $\prod_{i\in N}u_i(M_i)$. An allocation is said to be a \emph{maximum Nash welfare (MNW)} allocation if it has the maximum Nash welfare among all allocations.\footnote{In the case where the maximum Nash welfare is 0, an allocation is an MNW allocation if it gives positive utility to a set of agents of maximal size and moreover maximizes the product of utilities of the agents in that set.}
\end{definition}

\begin{definition}[MEW]
The \emph{egalitarian welfare} of an allocation is defined as $\min_{i\in N}u_i(M_i)$. An allocation is said to be a \emph{maximum egalitarian welfare (MEW)} allocation if it has the maximum egalitarian welfare among all allocations.
\end{definition}

\begin{definition}[LEX]
An allocation is said to be \emph{leximin (LEX)} if it maximizes the lowest utility (i.e., the egalitarian welfare), and, among all such allocations, maximizes the second lowest utility, and so on.
\end{definition}

Finally, we define Pareto optimality. While this is an efficiency notion rather than a fairness notion, we also consider it as it is a fundamental property in the context of resource allocation.

\begin{definition}[PO]
Given an allocation $(M_1,\dots,M_n)$, another allocation $(M_1',\dots,M_n')$ is said to be a \emph{Pareto improvement} if $u_i(M_i')\geq u_i(M_i)$ for all $i$ with at least one strict inequality. An allocation is \emph{Pareto optimal (PO)} if it does not admit a Pareto improvement.
\end{definition}

\citet{CaragiannisKuMo16} showed that an MNW allocation always satisfies EF1 and Pareto optimality. It is clear from the definition that any leximin allocation is Pareto optimal and maximizes egalitarian welfare. The problem of computing an MEW allocation has been studied by \citet{BezakovaDa05} and \citet{BansalSr06}. Leximin allocations were studied by \citet{BogomolnaiaMo04} and shown to be applicable in practice by \citet{KurokawaPrSh15}.


\section{Envy-Freeness Relaxations and Round-Robin Algorithm}

In this section, we consider envy-freeness relaxations and the round-robin algorithm, which always produces an EF1 allocation. 

\subsection{Envy-Freeness Relaxations}

We begin with a general lower bound on the price of EF1.

\begin{theorem}
\label{thm:EF1-PoS-lower}
The price of EF1 is $\Omega(\sqrt{n})$.
\end{theorem}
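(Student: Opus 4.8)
The plan is to exhibit, for each perfect square $n=k^2$, a single instance whose price of EF1 is $\Omega(\sqrt{n})$. The instance has $m=n$ goods viewed as a $k\times k$ grid $\{g_{j,\ell}:j,\ell\in\{1,\dots,k\}\}$, with rows $R_j=\{g_{j,1},\dots,g_{j,k}\}$. There are two kinds of agents: $k$ \emph{row agents}, where row agent $j$ has utility $1/k$ for each good of $R_j$ and $0$ for all other goods; and $n-k$ \emph{uniform agents}, each of whom has utility $1/n$ for every good. All utility functions sum to $1$, and the total number of agents is $k+(n-k)=n$. (For a general $n$ one takes $k=\lfloor\sqrt n\rfloor$, keeps the $k\times k$ grid of $k^2\le n$ goods, and lets the remaining $n-k$ agents be uniform over these goods; the bounds below only change by constant factors.)

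First I would record the easy half: assigning row $R_j$ to row agent $j$ gives social welfare exactly $k=\sqrt n$, and since no good is worth more than $1/k$ to anyone, this is optimal, so $\OPT\ge\sqrt n$.

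The heart of the proof is to show that \emph{every} EF1 allocation has social welfare $O(1)$. The key step is a pigeonhole observation: the $n-k$ uniform agents together hold at most $n$ goods, so some uniform agent $a$ receives at most one good. Since $a$ values every good equally (at $1/n$), applying the EF1 condition to the pair $(a,i')$ forces $|M_{i'}|-1\le |M_a|\le 1$ for every other agent $i'$; hence \emph{every} bundle in the allocation has at most two goods. Consequently each row agent gets utility at most $2/k$ and each uniform agent at most $2/n$, so the total welfare is at most $k\cdot\tfrac{2}{k}+(n-k)\cdot\tfrac{2}{n}<4$. Combining with $\OPT\ge\sqrt n$, the price of EF1 on this instance is at least $\sqrt n/4=\Omega(\sqrt n)$.

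I expect the only real subtlety to be using the EF1 constraint in the right direction: one must first locate an agent whose bundle is \emph{provably} tiny — which is exactly what the pigeonhole over the $\Theta(n)$ uniform agents provides — and only then propagate the bundle-size bound outward to all other agents. A single uniform agent would be useless for this, so it is essential that there be linearly many of them, and it is equally essential that the $\sqrt n$ row agents (not the uniform agents) are the ones capable of producing large welfare in $\OPT$, so that the gap between $\OPT$ and the best EF1 allocation is genuinely $\Theta(\sqrt n)$. Everything else is routine counting.
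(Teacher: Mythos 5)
Your construction and argument are essentially the paper's own: $\Theta(\sqrt{n})$ ``block'' agents with disjoint supports give $\OPT=\Theta(\sqrt{n})$, while $\Theta(n)$ uniform agents impose EF1 constraints that force every EF1 allocation to spread the goods out, capping its welfare at $O(1)$. The only (harmless) quibble is that your pigeonhole step requires $2(n-k)>n$, i.e.\ $k\ge 3$, before you may conclude that some uniform agent holds at most one good; the paper instead observes that every uniform agent must hold at least one good, which directly limits the block agents to at most $\sqrt{n}$ goods in total and gives the same $O(1)$ bound.
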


\begin{proof}
Let $m=n, r=\lfloor\sqrt{n}\rfloor$, and assume that the utilities are as follows:
\begin{itemize}
    \item For $i=1,\dots,r-1$: $u_i((i-1)r+j)=\frac{1}{r}$ for $j=1,\dots,r$, and $u_i(j)=0$ otherwise.
    \item $u_r(j)=\frac{1}{n-r(r-1)}$ for $j=r(r-1)+1,\dots,n$, and $u_r(j)=0$ otherwise.
    \item For $i=r+1,\dots,n$: $u_i(j)=\frac{1}{n}$ for all $j$.
\end{itemize}

Consider the allocation that assigns goods $ir-r+1,\dots,ir$ to agent $i$ for $i=1,\dots,r-1$ and the remaining goods to agent $r$. The social welfare of this allocation is $r$. On the other hand, in any EF1 allocation, each of the agents $i=r+1,\dots,n$ must receive at least one good---otherwise some agent would receive at least two goods and agent $i$ would envy her. This means that the social welfare is at most $r\cdot\frac{1}{r}+(n-r)\cdot\frac{1}{n}<2$. Hence the price of EF1 is at least $\frac{r}{2}=\frac{\lfloor\sqrt{n}\rfloor}{2}$.
\end{proof}

For two agents, we establish an almost tight bound on the price of EF1 and a tight bound on the price of EFX.
We start with a lower bound for EF1.

\begin{theorem}
\label{thm:EF1-PoS-2-lower}
For $n=2$, the price of EF1 is at least $\frac87 \approx 1.143$.
\end{theorem}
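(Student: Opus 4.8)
The plan is to exhibit a single two‑agent instance for which every EF1 allocation has social welfare at most $7/8$ of the optimum, while the optimum is $1$ (or at least scales so that the ratio is $8/7$). Since $n=2$, the EF1 condition reduces to: for each ordered pair, removing one good from the other agent's bundle kills the envy. With two agents and normalized utilities $u_1(M)=u_2(M)=1$, the natural construction is a small, symmetric menu of goods — say three or four goods — where the welfare‑optimal allocation is highly lopsided (one agent gets almost everything valued highly by her and almost nothing valued by the other), but that lopsided allocation violates EF1 because the "poor" agent envies even after one good is removed.

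Concretely, I would look for an instance with (essentially) three goods, with a utility matrix of the form: agent $1$ values the goods, say, $(1/2,1/2,0)$ and agent $2$ values them $(\varepsilon, \varepsilon, 1-2\varepsilon)$, or some variant with a fourth good, tuned so that (i) the allocation giving agent $1$ the two high‑value goods and agent $2$ the rest has welfare close to $1 + (1-2\varepsilon)$, wait — I need to be careful: welfare is a sum of normalized utilities, each at most $1$, so $\OPT \le n = 2$, and I want $\OPT / (\text{best EF1})$ to equal $8/7$. So I would aim for $\OPT$ around, say, $7/6$ or $4/3$ with the best EF1 allocation proportionally smaller. The key modeling step is to choose the numbers so that the unique (up to symmetry) high‑welfare allocations are exactly the ones ruled out by EF1, forcing the best EF1 allocation down to a balanced‑type split whose welfare I can compute exactly and show equals $7/8$ of $\OPT$.

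The steps, in order: (1) fix a concrete utility matrix with small rational entries; (2) compute $\OPT$ by checking the finitely many allocations (at most $2^m$); (3) enumerate which allocations are EF1 — here I'd use the $n=2$ simplification heavily, noting that a balanced split (one good each, or $\lfloor m/2\rfloor$ vs $\lceil m/2 \rceil$) is automatically EF1 by the round‑robin argument, and that sufficiently lopsided splits are not; (4) among the EF1 allocations, identify the one of maximum welfare and compute it exactly; (5) take the ratio and verify it is $8/7$ (or take a limit over a parameter $\varepsilon \to 0$ if the bound is only approached). Because the instance is finite and tiny, essentially all of this is a bounded case check rather than a genuine argument.

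The main obstacle is purely the reverse‑engineering in step (1): finding the right small matrix so that the ratio comes out to exactly $8/7$ rather than something weaker like $9/8$ or $6/5$. This requires understanding which welfare‑heavy allocations EF1 forbids and by how much — i.e., tracking the "tension" between a welfare‑maximizing lopsided allocation and the balanced EF1 allocation, and choosing utilities that maximize that gap subject to the normalization $u_i(M)=1$. I expect the construction to involve either three goods with a carefully chosen middle value, or four goods split $3$–$1$, and I would guess the extremal value $8/7$ is attained (not merely approached), so no limiting argument is needed; if it is only approached, a parameter $\varepsilon>0$ is introduced and the bound stated as "at least $8/7 - o(1)$," then cleaned up to $8/7$ since the supremum over instances is what matters. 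A secondary, minor obstacle is making sure I have genuinely checked \emph{all} EF1 allocations and not overlooked some moderately lopsided one with higher welfare than the balanced split — but with $m\le 4$ this is a quick finite verification.
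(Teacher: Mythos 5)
Your overall strategy is the right one and matches the paper's: a single three-good instance, parametrized by a small $\epsilon>0$, in which the welfare-optimal allocation is forced to violate EF1, followed by a finite check of the remaining allocations and a limit $\epsilon\to 0$. However, the proposal has a genuine gap: for a lower-bound theorem of this kind the explicit instance \emph{is} the proof, and you never produce one. Worse, the one concrete candidate you float --- agent $1$ with values $(1/2,1/2,0)$ and agent $2$ with $(\varepsilon,\varepsilon,1-2\varepsilon)$ --- fails outright: the welfare-optimal allocation (goods $1,2$ to agent $1$, good $3$ to agent $2$) gives agent $2$ utility $1-2\varepsilon$ while she values agent $1$'s bundle at only $2\varepsilon$, so that allocation is already envy-free and the price of EF1 of your instance is $1$. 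The structural point you are missing is that the ``poor'' agent in the optimal allocation must place \emph{high} value on the goods held by the ``rich'' agent (so that envy survives the removal of one good), while the rich agent must place \emph{zero} value on the poor agent's good (so that the optimal welfare is as large as possible, namely $1$ plus the poor agent's utility). Your candidate has this backwards: the poor agent's utility is concentrated on her own good.

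The paper's construction realizes exactly this tension: $u_1=(1/3-2\epsilon,\,1/3+\epsilon,\,1/3+\epsilon)$ and $u_2=(0,\,1/2,\,1/2)$. The optimum gives good $1$ to agent $1$ and goods $2,3$ to agent $2$, with welfare $4/3-2\epsilon$; this is not EF1 because agent $1$'s utility $1/3-2\epsilon$ is below $1/3+\epsilon$, the value she has for either single good remaining in agent $2$'s bundle after one removal. Hence every EF1 allocation withholds at least one of goods $2,3$ from agent $2$, capping its welfare at $(1/3-2\epsilon)+(1/3+\epsilon)+1/2=7/6-\epsilon$, and the ratio tends to $\tfrac{4/3}{7/6}=\tfrac{8}{7}$. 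Your steps (2)--(5) would go through verbatim once this matrix is in hand, and your anticipation that the bound is only approached (requiring $\epsilon\to 0$) is correct; but as written the proposal stops exactly at the step that carries all the content.
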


\begin{proof}
Let $m=3$ and $0<\epsilon<1/6$, and assume that the utilities are as follows:
\begin{itemize}
    \item $u_1(1) = 1/3 - 2\epsilon, u_1(2) = 1/3 + \epsilon, u_1(3) = 1/3 + \epsilon$
    \item $u_2(1) = 0, u_2(2) = 1/2, u_2(3) = 1/2$
\end{itemize}
The optimal social welfare is $4/3 - 2\epsilon$, achieved by assigning the first good to agent 1 and the last two goods to agent 2. However, in any EF1 allocation the last two goods cannot both be given to agent 2. Hence the social welfare of an EF1 allocation is at most $(1/3 - 2\epsilon) + (1/3 + \epsilon) + 1/2 = 7/6 - \epsilon$. Taking $\epsilon \rightarrow 0$, we find that the price of EF1 is at least $\frac{4/3}{7/6} = 8/7$.
\end{proof}

We now turn to the upper bound.
In order to construct an EF1 allocation with high welfare, we proceed in a similar manner to the \emph{adjusted winner} procedure \citep{BramsTa96}, which is used to allocate divisible goods between two agents.
Specifically, we arrange the goods according to the ratios between the utilities that they yield for the two agents---the idea is that the agents will then prefer goods at different ends.
Roughly speaking, we then let the agent who obtains a lower utility in an optimal allocation choose a minimal set of goods for which she is EF1 starting from her end.

\begin{theorem}
\label{thm:EF1-PoS-2-upper}
For $n=2$, the price of EF1 is at most $\frac{2}{\sqrt{3}} \approx 1.155$.
\end{theorem}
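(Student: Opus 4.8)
The plan is to follow the adjusted-winner intuition sketched in the paragraph preceding the theorem and to reduce the problem to a single-variable optimization. First I would order the goods $1,2,\dots,m$ so that the ratio $u_1(j)/u_2(j)$ is non-increasing (breaking ties and handling zeros in the worst-case manner allowed by the footnote on tie-breaking), so that agent $1$ ``prefers'' prefixes and agent $2$ ``prefers'' suffixes. Let $(M_1^*, M_2^*)$ be an optimal allocation, and without loss of generality suppose agent $2$ is the one receiving the smaller utility in it (i.e. $u_2(M_2^*)\le u_1(M_1^*)$); by normalization this gives $u_2(M_2^*)\le \OPT/2 \le 1$ and $u_1(M_1^*)\ge \OPT/2$. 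I would then let agent $2$ greedily take goods from her end (the suffix) until she first reaches an EF1-feasible split, say she takes a suffix $S$, and give the complementary prefix $P = M\setminus S$ to agent $1$; call the resulting allocation $\cA$. By construction $\cA$ is EF1 for agent $2$ ``just barely,'' meaning that if we removed the last good agent $2$ picked, she would no longer be EF1 — this minimality is what lets us lower bound $\SW(\cA)$.

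The key steps, in order, are: (1) argue that this allocation $\cA$ is EF1 for \emph{both} agents — for agent $2$ by the stopping rule, and for agent $1$ because she holds a prefix in the ratio order and goods in agent $2$'s suffix are (ratio-wise) worth less to agent $1$, so at most one good's removal suffices; this will need the standard prefix/suffix exchange argument. (2) Establish a lower bound on $\SW(\cA) = u_1(P) + u_2(S)$. Here I would use the minimality of $S$: if $g$ is the last good agent $2$ added, then before adding it the split $(P\cup\{g\}, S\setminus\{g\})$ was not EF1 for agent $2$, i.e. $u_2(S\setminus\{g\}) < u_2(M)/2 = 1/2$, hence $u_2(S) < 1/2 + u_2(g) \le 1/2 + 1$ — too weak directly, so instead I would split into cases by whether the single good $g$ is ``large'' or ``small'' for agent $2$. (3) Compare against $\OPT \le 2$ (since each $u_i(M)=1$, $\OPT = u_1(M_1^*)+u_2(M_2^*)\le 2$), and optimize: the worst case should be a balanced instance where agent $2$'s suffix utility is pushed as low as the EF1 stopping rule allows while agent $1$'s prefix utility is simultaneously small, and I expect the extremal configuration to be the symmetric three-good-type instance with the bound $2/\sqrt{3}$ emerging from setting a parameter $t$ (the utility of the pivotal good) so that $t = \sqrt{3}-1$ or similar, balancing $u_1(P)$ against $u_2(S)$.

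The main obstacle I anticipate is step (2)/(3): getting a \emph{tight} lower bound on $u_1(P) + u_2(S)$ rather than a loose one. The naive ``EF1 stopping'' bound only says agent $2$'s bundle is worth at least about $1/2$ minus one good to her, and agent $1$'s bundle could in principle be nearly worthless to agent $1$ if all the ratio-heavy goods are tiny in absolute terms for agent $1$; ruling this out requires carefully using both the ratio ordering \emph{and} the fact that $(M_1^*,M_2^*)$ is optimal — in particular that agent $1$'s optimal bundle $M_1^*$ is itself a ``prefix-like'' set in the ratio order (any optimal allocation for two agents can be taken to be a ratio-threshold allocation), so $P \supseteq$ or $\approx M_1^*$ up to the boundary goods. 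I would make this precise by showing one may assume the optimal allocation is itself threshold-structured, so that $P$ and $M_1^*$ differ only near the pivot, and then the loss $u_1(M_1^*) - u_1(P)$ is controlled by the single pivotal good. Combining the two one-sided losses and optimizing the resulting quadratic in the pivot's value is where the $\frac{2}{\sqrt 3}$ constant should fall out; the rest is routine case analysis and algebra.
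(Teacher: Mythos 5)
Your construction is, up to swapping the roles of the two agents, the same as the paper's: sort the goods by the ratio $u_1(j)/u_2(j)$, observe that the optimum is a threshold split, and let the agent who is worse off in the optimum greedily extend her bundle from her own end of the order until the split first becomes EF1 for her. Your EF1 verification is also on the right track: the stopping rule handles the extending agent, and for the other agent the paper uses exactly the minimality of the stopping index combined with the ratio order, which is what your ``prefix/suffix exchange argument'' would amount to.

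The genuine gap is the welfare lower bound --- the step you yourself flag as the main obstacle --- and the fix you sketch would fail. You propose that $P$ and $M_1^*$ ``differ only near the pivot,'' so that the loss is ``controlled by the single pivotal good.'' This is not true: the low-utility agent may have to absorb many goods from the other agent's optimal bundle before she becomes EF1 (in the extremal instance her optimal utility is only $2-\sqrt{3}\approx 0.27$, so she must take goods worth roughly $0.23$ more to herself, and these can be many small goods each worth much more to the other agent), so no single-good accounting controls $u_1(M_1^*)-u_1(P)$. The paper closes the gap multiplicatively, with three ingredients in your labels: (i) $u_2(S)=1-u_2(P)$ by normalization; (ii) since $P$ consists of the \emph{highest-ratio} goods of $M_1^*$, one has $u_1(P)\geq u_2(P)\cdot\frac{u_1(M_1^*)}{u_2(M_1^*)}$ --- this is precisely what rules out the scenario you worry about, where agent $1$'s kept prefix is nearly worthless to her; and (iii) minimality of the stopping index forces $u_2(P)>u_2(M_2^*)$. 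Writing $a=u_2(M_2^*)<\tfrac12\leq b=u_1(M_1^*)$ and noting $u_2(M_1^*)=1-a$ and that every good in $M_1^*$ has ratio greater than $1$, these combine to give $u_1(P)+u_2(S)\geq 1-a+\frac{ab}{1-a}$, and maximizing $\frac{a+b}{1-a+\frac{ab}{1-a}}$ (increasing in $b$, so $b=1$, then optimize over $a$) yields $\frac{2}{\sqrt{3}}$ at $a=2-\sqrt{3}$. Your guess about the extremal parameter being the pivotal good's utility is also off for the same reason: the optimization variable is the losing agent's utility in the optimum, not a single good's value. Without ingredient (ii), your plan has no mechanism to complete steps (2)--(3).
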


\begin{proof}
Consider an arbitrary instance. Sort the goods so that $\frac{u_1(1)}{u_2(1)} \geq \frac{u_1(2)}{u_2(2)} \geq \cdots \geq \frac{u_1(m)}{u_2(m)}$; goods $x$ such that $u_2(x) = 0$ are put at the front and those with $u_1(x) = 0$ at the back, with arbitrary tie-breaking within each group of goods. (Goods that yield zero value to both agents can be safely ignored since they have no effect on the optimal welfare or the maximum welfare of an EF1 allocation.) For ease of notation, for any $1 \leq k \leq m$ we write $L(k) := \{1, \ldots, k\}$ and $R(k) := \{k, \ldots, m\}$. We also define $L(0) = R(m+1) = \emptyset$.

Let $S_1 := \{i \mid \frac{u_1(i)}{u_2(i)} > 1\} = L(s)$ for some $0 \leq s \leq m$ and $S_2 := M\setminus S_1 = R(s+1)$. It is easy to see that $s<m$. If $s=0$, both agents have identical utilities and the price of EF1 is $1$, so we may assume that $s>0$. The allocation $\mathcal{S} = (S_1, S_2)$ is an optimal allocation, and the optimal social welfare is $u_1(S_1) + u_2(S_2)$.
Without loss of generality, assume that $u_1(S_1) \leq u_2(S_2)$. Note that we must have $u_2(S_2) \geq \frac12$, since otherwise both $u_1(S_1)$ and $u_2(S_2)$ are smaller than $\frac12$ and switching $S_1$ and $S_2$ would yield a higher social welfare. We can further assume that $u_1(S_1) < \frac{1}{2}$, because otherwise $\mathcal{S}$ is also an EF1 allocation and the price of EF1 is $1$.

Next, we describe how to obtain a particular EF1 allocation $\mathcal{F}$.
Let $f$ be the smallest index such that $f \geq s$ and $u_1(L(f)) \geq u_1(R(f+2))$. Clearly, $f<m$. In the allocation $\mathcal{F} = (F_1, F_2)$, we assign the goods $F_1 := L(f)$ to agent 1, and $F_2 := R(f+1)$ to agent 2.
The pseudocode for computing $\mathcal{F}$ is presented as Algorithm~\ref{alg:EF1}.
See also Figure~\ref{fig:EF1}.

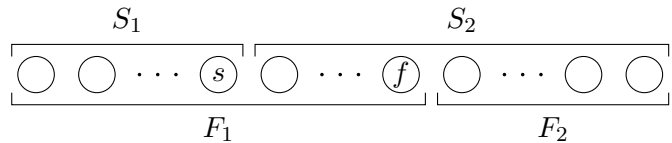
\begin{figure}[!ht]
\centering
\begin{tikzpicture}[scale=0.8]
\draw (3,3) circle [radius = 0.3];
\draw (4,3) circle [radius = 0.3];
\draw[fill] (4.7,3) circle [radius = 0.02];
\draw[fill] (5,3) circle [radius = 0.02];
\draw[fill] (5.3,3) circle [radius = 0.02];
\draw (6,3) circle [radius = 0.3];
\draw (7,3) circle [radius = 0.3];
\draw[fill] (7.7,3) circle [radius = 0.02];
\draw[fill] (8,3) circle [radius = 0.02];
\draw[fill] (8.3,3) circle [radius = 0.02];
\draw (9,3) circle [radius = 0.3];
\draw (10,3) circle [radius = 0.3];
\draw[fill] (10.7,3) circle [radius = 0.02];
\draw[fill] (11,3) circle [radius = 0.02];
\draw[fill] (11.3,3) circle [radius = 0.02];
\draw (12,3) circle [radius = 0.3];
\draw (13,3) circle [radius = 0.3];
\node at (6,3) {$s$};
\node at (9,3) {$f$};
\draw (2.6,3.3) -- (2.6,3.5) -- (6.4,3.5) -- (6.4,3.3);
\draw (6.6,3.3) -- (6.6,3.5) -- (13.4,3.5) -- (13.4,3.3);
\draw (2.6,2.7) -- (2.6,2.5) -- (9.4,2.5) -- (9.4,2.7);
\draw (9.6,2.7) -- (9.6,2.5) -- (13.4,2.5) -- (13.4,2.7);
\node at (4.5,3.9) {$S_1$};
\node at (10,3.9) {$S_2$};
\node at (6,2.1) {$F_1$};
\node at (11.5,2.1) {$F_2$};
\end{tikzpicture}
\caption{Illustration for the proof of Theorem~\ref{thm:EF1-PoS-2-upper}.}
\label{fig:EF1}
\end{figure}

\begin{algorithm}
\caption{For computing an EF1 allocation with high social welfare between two agents}\label{alg:EF1}
\begin{algorithmic}[1]
\Procedure{EF1-Two-Agents}{$N,M,u_1,u_2$}
\State Assume that in an optimal allocation, agent $1$ obtains no higher utility than agent $2$. (Otherwise, reverse the roles of the two agents.)
\State Sort the goods so that $\frac{u_1(1)}{u_2(1)} \geq \frac{u_1(2)}{u_2(2)} \geq \cdots \geq \frac{u_1(m)}{u_2(m)}$.
\For{$k=1,2,\dots,m$}
\State $L(k)\leftarrow \{1,\dots,k\}$
\State $R(k)\leftarrow \{k,\dots,m\}$
\EndFor
\State $s\leftarrow 0$
\While{$s < m$ and $\frac{u_1(s+1)}{u_2(s+1)} > 1$}
\State $s\leftarrow s+1$
\EndWhile
\State $f\leftarrow s$
\While{$u_1(L(f)) < u_1(R(f+2))$}
\State $f\leftarrow f+1$
\EndWhile
\State \Return $(L(f),R(f+1))$
\EndProcedure
\end{algorithmic}
\end{algorithm}

\paragraph{Allocation $\mathcal{F}$ satisfies EF1.}

The EF1 condition is satisfied for agent 1, because $u_1(F_1) \geq u_1(F_2 \backslash \{f+1\})$ by definition.

    For agent 2, since $f$ is the smallest index such that $f \geq s$ and $u_1(L(f)) \geq u_1(R(f+2))$, we have either $f = s$ or $u_1(L(f-1)) < u_1(R(f+1))$.

    If $f = s$, then $\mathcal{F}$ coincides with the optimal allocation $\mathcal{S}$, and $u_2(F_2) = u_2(S_2) \geq \frac12$. Clearly EF1 is satisfied.

    Else, $f > s$, and we have $0<u_1(L(f-1)) < u_1(R(f+1))$. Note also that $u_2(R(f+1))>0$. Therefore, 
    \begin{align*}
    \frac{u_1(L(f-1))}{u_2(L(f-1))} &\geq \frac{u_1(f-1)}{u_2(f-1)} 
    \geq \frac{u_1(f+1)}{u_2(f+1)} \geq \frac{u_1(R(f+1))}{u_2(R(f+1))},    
    \end{align*}    
    where we take a fraction to be infinite if it has denominator $0$.\footnote{To see the first and third inequalities, one may prove by induction that in general, if we have $\frac{a_1}{b_1}\geq\dots\geq\frac{a_k}{b_k}$, then $\frac{a_1}{b_1}\geq\frac{a_1+\dots+a_k}{b_1+\dots+b_k}\geq \frac{a_k}{b_k}$. The case $k=2$ holds because $\frac{a_1}{b_1}\geq\frac{a_1+a_2}{b_1+b_2}$ is equivalent to $\frac{a_1}{b_1}\geq\frac{a_2}{b_2}$, and similarly for $\frac{a_1+a_2}{b_1+b_2}\geq\frac{a_2}{b_2}$.} (None of the fractions can have both numerator and denominator $0$.)
    
    Since $u_1(L(f-1)) < u_1(R(f+1))$, this implies that
    $$\frac{u_2(L(f-1))}{u_2(R(f+1))} \leq \frac{u_1(L(f-1))}{u_1(R(f+1))} < 1.$$
    Thus, $$u_2(F_2) = u_2(R(f+1)) > u_2(L(f-1)) = u_2(F_1 \backslash \{f\}),$$
    implying that EF1 is again satisfied.

\paragraph{The price of EF1 for this instance is at most $\frac{2}{\sqrt{3}}$.}
Now we analyze the social welfare of the allocation $\mathcal{F}$ and compare it to the optimal social welfare.

If $f=s$, the price of EF1 is $1$. Assume from now on that $f>s$. We have $u_1(F_2) > u_1(L(f-1)) \geq u_1(L(s)) = u_1(S_1)$
and $\frac{u_1(S_2)}{u_2(S_2)} \geq \frac{u_1(F_2)}{u_2(F_2)}$. Since $u_1(F_2)>0$, we also have $u_1(S_2)>0$. Moreover, $u_2(F_2),u_2(S_2)>0$. 
Thus,
\begin{align*}
u_1(F_1) + u_2(F_2) & \geq  (1 - u_1(F_2)) + \frac{u_1(F_2)u_2(S_2)}{u_1(S_2)} \\
& =  1 + \left(\frac{u_2(S_2)}{u_1(S_2)}-1\right)u_1(F_2) \\
& > 1 + \left(\frac{u_2(S_2)}{u_1(S_2)}-1\right)u_1(S_1) \\
& = 1 - u_1(S_1) + \frac{u_2(S_2)}{u_1(S_2)}\cdot u_1(S_1) \\
& = 1 - u_1(S_1) + \frac{u_2(S_2)}{1-u_1(S_1)}\cdot(1+(u_1(S_1)-1)) \\
& = 1 - u_1(S_1) + \frac{u_2(S_2)}{1-u_1(S_1)} - u_2(S_2).
\end{align*}
Therefore the ratio between the optimal social welfare and the social welfare of $\mathcal{F}$ is
\begin{align*}
\alpha &:= \frac{u_1(S_1)+u_2(S_2)}{u_1(F_1)+u_2(F_2)} < \frac{u_1(S_1)+u_2(S_2)}{\frac{u_2(S_2)}{1 - u_1(S_1)} + 1 - u_2(S_2) - u_1(S_1)}.
\end{align*}

We further analyze the last expression. First, taking its partial derivative with respect to $u_2(S_2)$ gives
\[
\frac{(1-u_1(S_1))(1-2u_1(S_1))}{(u_1(S_1)^2+u_1(S_1)(u_2(S_2)-2)+1)^2},
\]
which is always positive when $u_1(S_1) < \frac12$.
This shows that the last expression is monotone increasing in $u_2(S_2)$. Thus
$$\alpha < \frac{u_1(S_1) + 1}{\frac{1}{1 - u_1(S_1)} - u_1(S_1)}.$$
Finally, this expression is maximized when $u_1(S_1) = 2-\sqrt{3}$ and yields a value of $\frac{2}{\sqrt{3}}$, completing the proof.
\end{proof}

The gap on the price of EF1 between Theorems~\ref{thm:EF1-PoS-2-lower} and \ref{thm:EF1-PoS-2-upper} is only approximately $0.01$.
For EFX, we establish a tight bound in the case of two agents.

\begin{theorem}
\label{thm:EFX-PoS-2}
For $n=2$, the price of EFX is $3/2$.
\end{theorem}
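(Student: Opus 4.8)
The plan is to prove the claimed bound of $3/2$ on the price of EFX for two agents by establishing a matching lower bound and upper bound.

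\textbf{Lower bound.} I would exhibit an instance showing the price of EFX is at least $3/2 - o(1)$. The natural candidate is a small instance with $m$ goods where the optimal allocation gives all (or nearly all) goods to one agent, while EFX forces a roughly even split of cardinalities. A clean construction: take $m=2$ goods, $u_1(1)=u_1(2)=1/2$, and $u_2(1)=1, u_2(2)=0$. The optimal allocation gives good $1$ to agent $2$ and good $2$ to agent $1$, for welfare $1/2 + 1 = 3/2$. But is the optimal allocation EFX? Agent $2$ has the single good $1$; agent $1$ has good $2$. Agent $1$ does not envy agent $2$ after removing good $1$ (value $0$ remains); agent $2$ does not envy agent $1$'s good of value $0$. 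So this instance actually has price $1$ — I need a better construction. Instead I would use something like three goods where agent $1$ is nearly indifferent and agent $2$ strongly wants the two goods that agent $1$ must keep for EFX; more precisely, perturb so that EFX (which is stricter than EF1) forces agent $2$ to keep one of the two high-value goods. A careful choice of utilities (e.g. agent $1$ having values slightly above $1/3$ on two goods and slightly below on a third, agent $2$ concentrated on two goods) should drive the ratio to $3/2$. The exact instance will take a little fiddling, but the structure — optimum concentrates goods, EFX mandates a split — is standard.

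\textbf{Upper bound.} For the upper bound I would proceed analogously to Theorem~\ref{thm:EF1-PoS-2-upper}: sort the goods by the ratio $u_1(j)/u_2(j)$, identify the optimal ``cut'' allocation $\mathcal{S}=(S_1,S_2)$, assume WLOG $u_1(S_1)\le u_2(S_2)$, and construct an EFX allocation $\mathcal{F}$ by having the worse-off agent take a minimal prefix of goods from her favored end so that the EFX condition just barely holds. The key difference from the EF1 case is that the ``up to any good'' quantifier means agent $1$ must be non-envious after removing the \emph{least}-valued (to her) good in agent $2$'s bundle, not just some good; since the goods in $F_2$ are sorted to be exactly those agent $1$ values least per unit of agent $2$'s value, this should still be controllable, but the slack is smaller, which is why the bound degrades to $3/2$. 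I would let $f$ be the smallest index (at least $s$) such that the allocation $(L(f), R(f+1))$ is EFX for agent $1$, argue it is automatically EFX for agent $2$ by the same ratio-monotonicity lemma used in the EF1 proof (the footnote about $\frac{a_1}{b_1}\ge\cdots\ge\frac{a_k}{b_k}$ implying $\frac{a_1}{b_1}\ge\frac{a_1+\cdots+a_k}{b_1+\cdots+b_k}\ge\frac{a_k}{b_k}$), and then bound the welfare loss: the worst case is when agent $2$ had to ``give back'' almost half her optimal value to make agent $1$ EFX, leading to a worst-case ratio of $3/2$ attained in a degenerate limiting instance.

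\textbf{Main obstacle.} The hard part will be the upper bound welfare analysis: controlling exactly how much welfare is lost when the minimal EFX prefix for agent $1$ overshoots the optimal cut $s$. In the EF1 proof the overshoot cost $u_1(F_2) - u_1(S_1)$ was bounded using that $F_1$ need only dominate $F_2$ minus \emph{one} good; with EFX we must dominate $F_2$ minus its \emph{smallest} good (to agent $1$), and because $F_2$ consists of the goods agent $1$ values least, that smallest good may be tiny, so agent $1$ may be forced to absorb nearly the entire value of $F_2$ before EFX kicks in. Pinning down the resulting optimization — presumably a one- or two-variable calculation analogous to the $\alpha$ maximization in Theorem~\ref{thm:EF1-PoS-2-upper}, but now peaking at $3/2$ rather than $2/\sqrt{3}$ — and matching it precisely to the lower-bound instance is the crux. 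I would also need to handle the edge cases where one agent values some good at zero (goods sorted to the front or back) exactly as in the EF1 argument, and the trivial cases ($s=0$, or $\mathcal{S}$ already EFX) where the price is $1$.
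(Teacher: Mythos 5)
There are genuine gaps on both sides. For the lower bound, you never actually produce a working instance, and the structure you sketch cannot reach $3/2$: if agent $1$ is nearly indifferent over three goods (values near $1/3$ each) and agent $2$ is concentrated on two goods, the optimal welfare is at most about $1+\tfrac13=\tfrac43$, so the ratio is capped at $4/3$ no matter how EFX constrains the allocation. To get $3/2$ you need the optimum itself to be close to $3/2$, which forces one agent to receive \emph{all} of her value while the other receives about $1/2$; the instance that works has \emph{each} agent valuing two goods at roughly $1/2$ apiece with exactly one good in common (e.g.\ $u_1 \approx (1/2,1/2,0)$, $u_2 \approx (1/2,0,1/2)$), so that the optimum is $\approx 3/2$ but EFX forbids either agent from holding both of her valued goods, capping the EFX welfare at $1$.

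For the upper bound, your adjusted-winner prefix plan is a genuinely different route from the paper's, but it has an unrepaired hole: you assert that EFX for agent $2$ follows ``automatically by the same ratio-monotonicity lemma,'' yet that lemma in the proof of Theorem~\ref{thm:EF1-PoS-2-upper} only certifies non-envy after removing the \emph{specific} good $f$ from $F_1$ (the last good of the prefix). EFX requires non-envy after removing the good of $F_1$ that agent $2$ values \emph{least}, which may be a good with $u_2=0$ sitting at the front of the order, and then the condition becomes $u_2(F_2)\ge u_2(F_1)$ --- something the minimal-prefix construction does not obviously guarantee once $f$ overshoots $s$. You also leave the welfare optimization entirely open, acknowledging it as ``the crux.'' Note that the paper sidesteps all of this with a much simpler argument: either the optimal allocation is envy-free (price $1$), or $\OPT\le 3/2$, and one can always build an EFX allocation of welfare at least $1$ by letting one agent split the goods into two bundles as evenly as possible (by her own valuation) and giving the other agent her preferred bundle --- any EFX violation for the splitter would let her move a good across the cut and make her partition more balanced, a contradiction. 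If you want to salvage your prefix approach you would need to (i) prove agent $2$'s EFX directly, e.g.\ by showing small $u_2(F_2)$ forces $u_1(F_2)\approx 0$ and hence $u_1(F_1)\approx 1$ via the ratio ordering, and (ii) carry out the welfare bound to show $u_1(F_1)+u_2(F_2)\ge 1$; neither step is present.
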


\begin{proof}
\emph{Lower bound}: Let $m=3$ and $0<\epsilon<1/2$, and assume that the utilities are as follows:
\begin{itemize}
    \item $u_1(1)=1/2+\epsilon, u_1(2)=1/2-\epsilon, u_1(3)=0$.
    \item $u_2(1)=1/2+\epsilon, u_2(2)=0, u_2(3)=1/2-\epsilon$.
\end{itemize}

The optimal social welfare is $3/2-\epsilon$, achieved by assigning the first two goods to agent 1 and the last good to agent 2.  On the other hand, in any EFX allocation, no agent can get both of the goods that they positively value. Hence, the social welfare of an EFX allocation is at most $1$. Taking $\epsilon\rightarrow 0$, we find that the price of EFX is at least $3/2$.

\medskip

\noindent\emph{Upper bound}: Consider an arbitrary instance. If in an optimal allocation both agents get utility at least $1/2$, this allocation is also envy-free and hence EFX, so the price of EFX is $1$. Otherwise, the maximum welfare is at most $1+1/2=3/2$. Now we show that there always exists an EFX allocation with social welfare at least $1$; this immediately yields the desired bound.

Let the first agent partition the goods into two bundles such that her values for the bundles are as equal as possible. Denote by $x$ and $1-x$ the values of the two bundles, where $x\geq 1-x$. Suppose that all goods of zero value, if any, are in the second bundle. Let $y\geq 1-y$ be the corresponding values for the second agent, and assume without loss of generality that $y\geq x$. Consider the partition of the first agent, and assume that the two bundles yield value $z$ and $1-z$ to the second agent, respectively. If $z\leq 1-z$, by assigning the first bundle to the first agent and the second bundle to the second agent, we have an envy-free allocation with welfare at least $1$. Else, $z\geq 1-z$. By definition of $y$, we also have $z\geq y\geq x$. We assign the first bundle to the second agent and the second bundle to the first agent. The second agent is clearly envy-free. If the first agent still has envy after removing some good $i$ from the first bundle, then by moving good $i$ to the second bundle, we create a more equal partition, a contradiction. Hence the allocation is EFX to the first agent. The social welfare of this allocation is $z+(1-x)\geq 1$.
\end{proof}

Next, we give a simple instance showing that EF1 and EFX allocations can have arbitrarily bad welfare. 

\begin{theorem}
\label{thm:EF1-EFX-PoA}
The strong price of EF1 is $\infty$. For $n=2$, the strong price of EFX is $\infty$.
\end{theorem}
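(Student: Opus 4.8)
The plan is to exhibit, for every small $\epsilon>0$, a single instance that admits an EF1 allocation whose social welfare is $O(\epsilon)$ while the optimal welfare stays bounded away from $0$; letting $\epsilon\to 0$ then forces $\OPT(I)/\min_{\cM\in\EF1(I)}\SW(\cM)$ to be unbounded, which is exactly the definition of an infinite strong price. Because in the construction every bundle will be a singleton, EF1 and EFX coincide on it, so the very same instance (with two agents) will simultaneously prove the EFX claim.

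Concretely, I would take $n$ agents and $n$ goods and let agent $i$ value good $i$ at $1-(n-1)\epsilon$ and every other good at $\epsilon$, which keeps the utilities normalized ($u_i(M)=1$) as long as $\epsilon<1/(n-1)$. The allocation giving good $i$ to agent $i$ has welfare $n\bigl(1-(n-1)\epsilon\bigr)$, so $\OPT$ is at least this and in particular is $\Omega(1)$. The ``bad'' allocation $\cM$ is the cyclic shift: agent $i$ receives good $i+1$, indices taken modulo $n$. Each agent then holds a good she values at only $\epsilon$, so $\SW(\cM)=n\epsilon$.

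The one verification needed is that $\cM$ satisfies EF1. Fix agents $i$ and $j$. If $j\neq i-1$, then $u_i(M_j)=\epsilon=u_i(M_i)$, so agent $i$ does not envy agent $j$ at all. If $j=i-1$, then $M_j=\{i\}$; removing the single good $i$ from $M_j$ leaves the empty bundle, which agent $i$ values at $0\le u_i(M_i)$, so the EF1 condition holds for this pair, and since $M_j$ is a singleton the same removal witnesses EFX. Hence $\cM$ is an EF1 (and, for $n=2$, EFX) allocation, and the strong price for this instance is at least $n\bigl(1-(n-1)\epsilon\bigr)/(n\epsilon)=\bigl(1-(n-1)\epsilon\bigr)/\epsilon\to\infty$ as $\epsilon\to 0$. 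Specializing to $n=2$ gives the EFX statement.

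I do not expect any real obstacle: the argument is a direct construction plus a one-line feasibility check. The only point that deserves a moment's care is that the naive ``pad the two-agent instance with dummy agents'' reduction fails for $n\ge 3$, since those dummies would receive their favourite goods in the bad allocation and keep the welfare ratio bounded by roughly $n/(n-2)$; the cyclic-shift instance, in which \emph{every} agent simultaneously ends up with a near-worthless good, is precisely what is needed to drive the ratio to infinity for all $n\ge 2$.
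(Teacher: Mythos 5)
Your proposal is correct and follows essentially the same route as the paper: both exhibit a near-diagonal instance and show that the cyclic-shift allocation, in which every agent receives a good she barely values, is EF1 (and EFX for $n=2$) yet has vanishing welfare. The paper's version is marginally cleaner---it sets the off-diagonal values to exactly $0$, so a single instance already has a fair allocation of welfare $0$ and no limit in $\epsilon$ is needed---but this is a cosmetic difference.
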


\begin{proof}
Let $m=n$, and assume that $u_i(i)=1$ for all $i$ and $u_i(j)=0$ otherwise. The allocation that assigns good $i$ to agent $i$ for every $i$ has social welfare $n$. On the other hand, the allocation that assigns good $i-1$ to agent $i$ for $i=2,\dots,n$ and good $n$ to agent $1$ is EF1 and EFX, but has social welfare $0$. The conclusion follows.
\end{proof}

\subsection{Round-Robin Algorithm}

We now turn our attention to the round-robin algorithm. We show that it is always possible to order the agents to obtain a welfare of $1$.

\begin{lemma}
\label{lem:roundrobin-welfare}
For any instance, there exists an ordering of the agents such that the round-robin algorithm implemented with this ordering produces an allocation with social welfare at least $1$, and this bound is tight.
\end{lemma}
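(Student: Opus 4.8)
The plan is to prove both directions of Lemma~\ref{lem:roundrobin-welfare}: an existence result (some ordering achieves welfare $\geq 1$) and a tightness result (no ordering can guarantee more than $1$). The tightness direction should be quick: I would exhibit a single instance where \emph{every} round-robin ordering produces an allocation of welfare exactly (or arbitrarily close to) $1$. A natural candidate is the instance used in Theorem~\ref{thm:EF1-EFX-PoA} or a mild perturbation of it---$m=n$ goods where agent $i$ values good $i$ at $1$ and everything else at (nearly) $0$; since round-robin just hands out goods one at a time and every agent has essentially one valuable good, the total welfare any ordering extracts is close to $1$ (the first picker grabs her good; after that the picked good is "used up" for others, and generically only that one agent derives value from it). I would make the $0$'s into small $\epsilon$'s so ties are broken cleanly and the welfare is strictly $1 + O(n\epsilon)$, then let $\epsilon \to 0$.

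For the existence direction, the plan is a probabilistic/averaging argument: choose the ordering of the agents \emph{uniformly at random} and show the expected social welfare of the resulting round-robin allocation is at least $1$; then some ordering meets the bound. The key observation is that in round-robin with $n$ agents over $m$ goods, there are $\lceil m/n \rceil$ rounds, and in round $\ell$ a uniformly random agent is the one choosing in position, say, $k = (\ell-1)n + p$ overall. I would track, for a fixed agent $i$, the utility $i$ derives from her own picks. When it is $i$'s turn and some set $T$ of goods has already been taken, $i$ picks her favorite remaining good, so she gets at least $\frac{u_i(M \setminus T)}{|M \setminus T|} \geq \frac{u_i(M\setminus T)}{m}$ --- but more usefully, over the whole run $i$ picks $\lceil (m - p_i + 1)/n\rceil$-ish goods and one can lower bound the sum. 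Actually the cleanest route: condition on the random permutation and observe that by symmetry each good $j$ is, from the perspective of the agent who eventually takes it, picked as a "favorite from the remaining pool"; summing the greedy guarantees and using $\sum_i u_i(M) = n$ together with $\prod$ over rounds of a $\frac{n-1}{n}$-type shrinkage factor is what yields the bound $1$. I would set this up carefully by writing $\mathrm{SW} = \sum_{i} \sum_{j \in M_i} u_i(j)$ and bounding the contribution round by round.

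The main obstacle I anticipate is making the averaging argument give \emph{exactly} the constant $1$ rather than something like $1 - e^{-1}$ or $H_n$-type losses, since naive greedy bounds tend to lose multiplicative factors. The trick is likely to compare against the \emph{proportional} share rather than the optimum directly: when agent $i$ picks in a round where $t$ goods remain, greedily she gets $\geq u_i(\text{remaining})/t$, and if one orders the accounting so that after each of $i$'s picks the "remaining" mass for $i$ shrinks by a controlled ratio, a telescoping/product argument gives that $i$'s total is at least her average value per round summed appropriately. Combined over all $i$ and using normalization $u_i(M)=1$, the $n$ agents' shares should sum to at least $1$ in expectation. I would verify the base case $m \leq n$ (each agent in the first $m$ positions gets her favorite remaining good, and a random such assignment already has expected welfare $\geq 1$ by a direct computation: good-by-good, a random agent getting her top remaining choice contributes at least its average residual value) and then induct or iterate over rounds. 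The other place to be careful is worst-case tie-breaking: the perturbation argument in the definition of RR lets me assume a generic instance, so I would invoke that to avoid pathological ties in the existence proof as well.
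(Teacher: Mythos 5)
Your high-level plan for the existence direction---randomize the ordering uniformly, show the expected welfare is at least $1$, and conclude by averaging---is exactly the paper's approach. But you have not actually closed the argument, and you correctly identify where it is at risk: bounding each pick by the \emph{average} residual value $u_i(\text{remaining})/t$ and telescoping will lose constant factors. The missing observation that makes the constant come out to exactly $1$ is the following. Fix agent $i$ and relabel her goods so that $u_i(1)\geq u_i(2)\geq\cdots\geq u_i(m)$. If she is ranked $j$th in the ordering, then when she makes her $(r+1)$st pick only $rn+j-1$ goods have been removed so far, so her favorite remaining good is worth at least $u_i(rn+j)$; hence her total utility is at least $\sum_{r=0}^{\lfloor (m-j)/n\rfloor} u_i(rn+j)$. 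As $j$ ranges uniformly over $\{1,\dots,n\}$, the index sets $\{rn+j\}$ partition $\{1,\dots,m\}$, so her expected utility is at least $\frac{1}{n}\sum_{k=1}^m u_i(k)=\frac{1}{n}$, with no loss whatsoever, and linearity of expectation finishes the proof. This is a pointwise comparison of each pick against a \emph{specific rank} in her preference order, not against an average, and it is the step your proposal leaves unresolved.

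The tightness direction of your proposal is wrong. In the diagonal instance from Theorem~\ref{thm:EF1-EFX-PoA} (agent $i$ values only good $i$), \emph{every} round-robin ordering yields welfare $n$, not $1$: the first picker takes her unique valued good, but the second picker's valued good is a \emph{different} good and is still available, so she takes it too, and so on for all agents. Your intuition that ``the picked good is used up for others'' is true but irrelevant, since no other agent wanted that good in the first place. The correct tightness witness is the instance in which all $n$ agents place utility $1$ on the \emph{same single} good; there, every allocation (round-robin or otherwise) has social welfare exactly $1$.
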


\begin{proof}
We claim that if we choose the ordering of the agents uniformly at random, the expected social welfare is at least $1$. The desired bound immediately follows from this claim.

To prove the claim, consider an arbitrary agent $i$, and assume without loss of generality that $u_i(1)\geq u_i(2)\geq\ldots\geq u_i(m)$. Note that if the agent is ranked $j$th in the ordering, her utility  is at least $u_i(j)+u_i(n+j)+u_i(2n+j)+\dots+u_i(kn+j)$, where $k=\lfloor(m-j)/n\rfloor$. Hence, the agent's expected utility is at least $$\frac{1}{n}\cdot\sum_{j=1}^n\sum_{r=0}^{\left\lfloor (m-j)/n\right\rfloor}u_i(rn+j) = \frac{1}{n}\cdot\sum_{j=1}^m u_i(j) = \frac{1}{n}.$$ 
It follows from linearity of expectation that the expected social welfare is at least $n\cdot\frac{1}{n}=1$, as claimed.

The tightness of the bound follows from the instance where every agent has utility $1$ for the same good.
\end{proof}

Lemma~\ref{lem:roundrobin-welfare} yields a linear price of fairness for round-robin.

\begin{theorem}
\label{thm:round-robin-PoS}
The price of round-robin is $n$. Consequently, the price of EF1 is at most $n$.
\end{theorem}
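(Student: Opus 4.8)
The plan is to prove the two halves separately: an easy upper bound of $n$ on the price of round-robin, which already yields the stated consequence for EF1, and a family of instances matching it. For the upper bound, note that since $u_i(M)=1$ for every agent, any allocation has $u_i(M_i)\le 1$, so $\OPT(I)\le n$ for every instance $I$; and by Lemma~\ref{lem:roundrobin-welfare} some ordering of the agents produces a round-robin allocation of social welfare at least $1$, so the maximum welfare over round-robin allocations is at least $1$. Dividing, the price of round-robin for $I$ is at most $n$, hence the overall price is at most $n$. The consequence for EF1 then follows because every round-robin allocation is EF1: the best EF1 allocation has welfare at least that of the best round-robin allocation, i.e.\ at least $1$, while $\OPT(I)\le n$.

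For the matching lower bound I would build, for every large integer $M$ divisible by $n$, an instance with $m=M^{n}$ goods. Put $N_j:=M^{\,n-j+1}$ for $j=1,\dots,n$ and $N_{n+1}:=0$, and let agent $j$ value each of the goods $1,\dots,N_j$ at $1/N_j$ and all other goods at $0$; all ties will be resolved in favour of lower-indexed goods, which (as the paper notes) can be realized exactly by an infinitesimal perturbation that I suppress. The sets $C_j:=\{N_{j+1}+1,\dots,N_j\}$ partition the good set, and assigning $C_j$ to agent $j$ yields welfare $\sum_{j=1}^n (N_j-N_{j+1})/N_j = n-\frac{n-1}{M}$, so $\OPT \ge n-\frac{n-1}{M}$. (The large value of $m$ is harmless here, since the price is tracked as a function of $n$ only.)

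The heart of the argument is that \emph{every} round-robin ordering on this instance produces social welfare exactly $1$. Under the stated tie-breaking each agent always takes the lowest-indexed remaining good, which is admissible because whenever agent $j$'s valued prefix $\{1,\dots,N_j\}$ still contains a free good, the globally lowest-indexed free good lies in that prefix and so is among $j$'s most valued goods. Hence, independently of the ordering, the agent occupying position $k$ ends up with precisely the goods $k,\ n+k,\ 2n+k,\dots$; since $N_j$ and $m$ are multiples of $n$, exactly $N_j/n$ of these lie in $\{1,\dots,N_j\}$, so agent $j$'s utility is $(N_j/n)\cdot(1/N_j)=1/n$ and the welfare is $n\cdot\frac1n=1$. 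Therefore the price of round-robin for this instance is $n-\frac{n-1}{M}$, which tends to $n$ as $M\to\infty$; together with the upper bound this gives that the price of round-robin equals $n$.

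The step I expect to require the most care is this round-robin analysis: one must check that breaking ties toward low indices is admissible (no agent is ever forced to take a strictly less valued good) and that this rule, with no reference to the agent ordering, pins down the entire execution, and one must use the divisibility of each $N_j$ by $n$ to make every agent's utility come out to \emph{exactly} $1/n$ rather than merely $1/n\pm o(1)$. It is that exactness, combined with the lower bound of $1$ from Lemma~\ref{lem:roundrobin-welfare}, that shows round-robin is genuinely the worst case and makes the constant $n$ tight.
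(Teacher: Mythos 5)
Your proposal is correct and follows essentially the same route as the paper: the upper bound via $\OPT(I)\le n$ together with Lemma~\ref{lem:roundrobin-welfare}, and a lower bound instance that is the paper's construction (nested valued prefixes of sizes $x, x^2, \dots, x^n$) up to a relabeling of the agents, with the same observation that low-index tie-breaking forces the goods to be picked in order $1,2,\dots,m$ so that every agent ends up with utility exactly $1/n$. The extra care you take in justifying the admissibility of the tie-breaking and the divisibility of each prefix length by $n$ is a welcome elaboration of details the paper leaves implicit, but it is not a different argument.
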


\begin{proof}
\emph{Upper bound}: Consider an arbitrary instance. Since every agent receives utility at most $1$, the optimal social welfare is at most $n$. On the other hand, by Lemma~\ref{lem:roundrobin-welfare}, there exists an ordering of the agents such that the round-robin algorithm yields welfare at least $1$. Hence the price of round-robin is at most $n$.

\medskip

\noindent\emph{Lower bound}: Let $m = x^n$ for some large $x$ that is divisible by $n$, and assume that the utilities are such that for each agent $i$, $u_i(j)=1/x^i$ for $j=1,\dots,x^i$ and $u_i(j)=0$ otherwise.

Consider the allocation that assigns goods $1,\dots,x$ to agent 1, and $x^{i-1}+1,\dots,x^i$ to agent $i$ for every $i\geq 2$. In this allocation, agent 1 gets utility 1, while each remaining agent gets utility $(x^i - x^{i-1})/x^i = 1 - 1/x$. The social welfare is therefore $n - (n-1)/x$. This converges to $n$ for large $x$.

On the other hand, consider the round-robin algorithm with an arbitrary ordering of the agents, and assume without loss of generality that agents always break ties in favor of goods with lower numbers. Hence, regardless of the ordering, the goods get chosen in the order $1,2,\dots,m$. As a result, every agent gets exactly $1/n$ of their valued goods, so her utility is $1/n$, and the social welfare is $1$. Hence the price of round-robin is $n$.
\end{proof}

The argument for the lower bound in Theorem~\ref{thm:round-robin-PoS} works even if we can choose a new ordering of the agents in every round. This means that the fixed order is not a barrier to obtaining a better price of fairness, but rather the ``each agent picks exactly once in every round'' aspect of the algorithm. 

One may notice that the lower bound construction uses an exponential number of goods. This is in fact necessary to obtain an instance with a high price of round-robin. As we show next, the $\Omega(\sqrt{n})$ lower bound on the price of EF1 is almost tight as long as $m$ is not too large compared to $n$.
At a high level, our proof proceeds by considering an optimal allocation and choosing a range $[2^{-\ell-1},2^{-\ell}]$ that the largest number of agents' utilities for goods in this allocation fall into.
In the case where a sufficiently large number of goods correspond to this range, we may choose an arbitrary round-robin ordering---we can lower bound the welfare resulting from the round-robin algorithm by observing that as long as we have not run out of goods from this range with respect to an agent, every pick must give the agent a utility at least the minimum utility that the agent obtains from this range.
On the other hand, if only a small number of goods belong to this range, we need to be more careful in choosing the ordering.

\begin{theorem}
\label{thm:round-robin-PoS-m}
The price of round-robin is $O(\sqrt{n}\log(mn))$. Consequently, the price of EF1 is $O(\sqrt{n}\log(mn))$.
\end{theorem}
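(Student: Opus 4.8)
The plan is to start from an optimal allocation $\mathcal{A}=(A_1,\dots,A_n)$ and to locate a value scale at which a large share of $\OPT$ is concentrated. For each good $j$ let $w_j$ denote the utility $j$ yields to its owner in $\mathcal{A}$, so $\OPT=\sum_j w_j$. Since $\OPT\ge 1$ always (e.g. give every good to an agent valuing it most) and Lemma~\ref{lem:roundrobin-welfare} already supplies a round-robin ordering of welfare at least $1$, we may assume $\OPT$ exceeds $c\sqrt n\log(mn)$ for a suitable absolute constant $c$, for otherwise the claim is immediate. The goods with $w_j\le\frac{1}{2mn}$ contribute at most $\frac1{2n}\le\OPT/2$ in total, so we discard them; the surviving goods have $w_j\in(\frac1{2mn},1]$, a range that splits into only $L+1=O(\log(mn))$ dyadic buckets $(2^{-\ell-1},2^{-\ell}]$. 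By pigeonhole some bucket $\ell^\ast$ accounts for welfare $W:=\sum_{j\in G^\ast}w_j=\Omega(\OPT/\log(mn))$, where $G^\ast$ is its set of goods; write $\beta:=2^{-\ell^\ast}$, let $N^\ast$ be the set of agents owning a good of $G^\ast$, and set $c_i:=|A_i\cap G^\ast|$, so that $\sum_{i\in N^\ast}c_i=|G^\ast|$, every good of $G^\ast$ is worth more than $\beta/2$ to its owner, and $W\in(\tfrac12\beta|G^\ast|,\,\beta|G^\ast|]$. The routine dyadic bucketing here is precisely the source of the $\log(mn)$ factor.

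The engine of the welfare lower bound is the following observation, valid for an arbitrary ordering: an agent's $r$-th pick occurs after fewer than $rn$ goods have been taken, so as long as not all of her $c_i$ goods of $G^\ast$ in $\mathcal{A}$ have been claimed — which is guaranteed whenever $c_i\ge rn$ — her favorite available good at that moment is worth more than $\beta/2$ to her. Hence agent $i$ is guaranteed utility at least $\lfloor c_i/n\rfloor\cdot\beta/2$, and more generally, if she is placed at an early position, she collects a good worth more than $\beta/2$ in each round until she runs out either of her $G^\ast$-goods or of picks; replacing $c_i$ by $|B_i|$, where $B_i:=\{j:u_i(j)>\beta/2\}\supseteq A_i\cap G^\ast$, only strengthens this.

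I would then split on the size of $|G^\ast|$. When $|G^\ast|$ is large relative to $n$, an arbitrary ordering already suffices: summing the guarantee $\lfloor c_i/n\rfloor\cdot\beta/2$ over $i\in N^\ast$, using $\sum_i c_i=|G^\ast|$, and dropping the lower-order contribution of agents with $c_i<n$ lower-bounds $\SW$ by a constant times $\beta|G^\ast|$, hence by a constant times $W$. When $|G^\ast|$ is small, I would instead choose the ordering deliberately: place the agents of $N^\ast$ in the first positions, ordered by decreasing $c_i$ (or $|B_i|$), so that each of them is guaranteed a good worth more than $\beta/2$ already in the first round(s). A greedy/Hall-type argument on the multiset $\{c_i\}$ — together with the fact that, since $\sum_i c_i$ is then small, only about $\sqrt n$ agents can simultaneously hold many $G^\ast$-goods — controls how many such agents can be served and at which positions. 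Combining the two regimes with the trivial bound $\SW\ge 1$ of Lemma~\ref{lem:roundrobin-welfare}, which absorbs the subcase where $W$ itself is only $O(\sqrt n)$, should yield $\SW\ge\Omega(W/\sqrt n)=\Omega(\OPT/(\sqrt n\log(mn)))$; since a round-robin allocation is always EF1, the bound for EF1 follows.

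The step I expect to be the main obstacle is exactly the choice of ordering in the "few goods" regime. Naively front-loading $N^\ast$ and reading off one good per agent only delivers a factor of $n$ rather than $\sqrt n$; obtaining $\sqrt n$ requires carefully exploiting the three-way tension between $|G^\ast|$, the scale $\beta$, and $|N^\ast|$ — intuitively, the smaller the goods of $G^\ast$ are, the more of them each agent holds and the more rounds round-robin can harvest, whereas the fewer of them there are, the fewer agents are involved and the less welfare we are obliged to recover. Making this trade-off quantitative, and tuning the thresholds so that the many-goods case, the few-goods case, and the $\SW\ge 1$ bound together cover all of parameter space, is where the real work lies.
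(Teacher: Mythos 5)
Your setup---reducing to the case $\OPT=\Omega(\sqrt{n}\log(mn))$ via Lemma~\ref{lem:roundrobin-welfare}, discarding negligible goods, dyadically bucketing the owners' values in an optimal allocation, and pigeonholing a bucket that carries an $\Omega(1/\log(mn))$ fraction of $\OPT$---matches the paper's proof of Theorem~\ref{thm:round-robin-PoS-m} exactly, as does the decision to split on the bucket size $T=|G^\ast|$ and to choose the ordering only when $T$ is small. The gap is in the welfare accounting in both cases. Your ``engine'' credits agent $i$ with $\lfloor c_i/n\rfloor\cdot\beta/2$, the guarantee against an adversary who deletes \emph{her} goods as fast as possible; summed over agents this is at most $(\beta/2)\sum_i c_i/n=\Theta(W/n)$, a full $\sqrt{n}$ factor short of the target $\Omega(W/\sqrt{n})$, and your stated conclusion ``a constant times $\beta|G^\ast|$, hence a constant times $W$'' overshoots what this sum can deliver by $\Theta(n)$. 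The paper's Case~1 ($T>2n$) accounts globally instead: in the first $t=\lfloor T/(2n)\rfloor$ rounds only $nt\le T/2$ goods are picked in total, so goods of $G^\ast$ with aggregate owner-value at least $\beta T/4\ge W/4=\Omega(\sqrt{n})$ survive; since the normalization $u_i(M)=1$ caps each agent's share of that value at $1$, \emph{more than} $\Omega(\sqrt{n})$ agents still own an unpicked $G^\ast$-good after $t$ rounds, and each such agent collected a good worth at least $\beta/2$ in \emph{every one} of the $t$ rounds, not just $\lfloor c_i/n\rfloor$ of them. The product $\Omega(\sqrt{n})\cdot\frac{T}{4n}\cdot\frac{\beta}{2}=\Omega(W/\sqrt{n})$ is precisely where the $\sqrt{n}$ comes from, and this ``few goods are consumed globally, so many agents survive'' step is what your per-agent bound cannot reproduce.

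The small-$T$ case, which you flag as the main obstacle, is actually the easier one once the right count is in hand: with threshold $T\le 2n$ the target is $W/\sqrt{n}\le\beta T/\sqrt{n}\le 2\sqrt{n}\,\beta$, so serving $\lceil 4\sqrt{n}\rceil$ agents one $(\beta/2)$-good each \emph{in the first round alone} suffices---naive front-loading does not lose a factor of $n$ here. The supply of agents again comes from the normalization: since the bucket's welfare exceeds $8\sqrt{n}$ and each agent contributes at most $1$ to it, more than $8\sqrt{n}$ agents have a nonempty share of the bucket; the greedy ordering then works because each agent you place invalidates at most two candidates (herself and the owner of the good she takes), so $8\sqrt{n}$ candidates last for $4\sqrt{n}$ placements. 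Your remark that ``only about $\sqrt{n}$ agents can simultaneously hold many $G^\ast$-goods'' points in the wrong direction: what is needed is a \emph{lower} bound of $\Omega(\sqrt{n})$ on the number of agents holding any $G^\ast$-good at all. In short, the skeleton is the paper's, but the two quantitative steps that actually produce the $\sqrt{n}$ are missing.
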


\begin{proof}

Consider any instance $I$. 
First, observe that if $\OPT(I) \leq 65\sqrt{n}\log_2 (mn)$, then Lemma~\ref{lem:roundrobin-welfare} immediately yields the desired result.
We therefore focus on the case where $\OPT(I) > 65\sqrt{n}\log_2 (mn)$.
We claim that there exists an ordering for which the round-robin algorithm produces an allocation with social welfare at least $\frac{\OPT(I)}{65\sqrt{n} \log_2 (mn)}$. 

Fix an optimal allocation $\mathcal{M} = (M_1, \dots, M_n)$, and let $r:=\lceil \log_2(m\sqrt{n}) \rceil$. For each $i \in N$, let us partition $M_i$ into $M_i^0 \cup M_i^1 \cup \dots\cup M_i^r$, where $M_i^\ell$ is defined by
\begin{align*}
M_i^\ell =
\begin{cases}
\{j \in M_i \mid u_i(j) \in (2^{-\ell - 1}, 2^{-\ell}]\} & \text{if } \ell \ne r; \\
\{j \in M_i \mid u_i(j) \in [0, 2^{-\ell}]\} & \text{if } \ell = r.
\end{cases}
\end{align*}
Furthermore, define $M^\ell := \cup_{i=1}^n M_i^\ell$ and $\SW_\ell(\mathcal{M}) := \sum_{i=1}^n u_i(M^\ell_i)$.  

Let $\ell^*:=\argmax_{\ell \in \{0, \dots, r - 1\}} \SW_\ell(\cM)$. We have
\begin{align*}
\SW_{\ell^*}(\cM) \geq \frac{1}{r} \left(\sum_{\ell=0}^{r-1}\SW_\ell(\cM)\right)
= \frac{\OPT(I) - \SW_r(\cM)}{r}.
\end{align*}
However, since agent $i$ values each item in $M_i^r$ at most $2^{-r} \leq \frac{1}{m\sqrt{n}}$, we have $u_i(M_i^r) \leq 1/\sqrt{n}$. This implies that $\SW_r(\cM) \leq \sqrt{n}$, which is no more than $\OPT(I)/65$. Hence,
\begin{align}
\label{eq:highwelfare}
\SW_{\ell^*}(\cM) \geq \frac{64}{65r} \cdot \OPT(I) \geq \frac{32\cdot\OPT(I)}{65\log_2(mn)}.
\end{align}
Thus, it suffices to show the existence of an ordering such that round-robin produces an allocation with social welfare at least $\SW_{\ell^*}(\cM)/\sqrt{n}$.

Observe that (\ref{eq:highwelfare}) implies that $\SW_{\ell^*}(\cM) > 32\sqrt{n}$.
We now consider two cases, based on $T := |M^{\ell^*}|$. Since $u_i(M_i^{\ell^*}) \leq 2^{-\ell^*}|M_i^{\ell^*}|$ for each $i$, we have $\SW_{\ell^*}(\cM) \leq 2^{-\ell^*}T$.

\underline{Case 1}: $T > 2n$. In this case, we will show that the round-robin algorithm with arbitrary ordering yields an allocation with social welfare at least $\SW_{\ell^*}(\cM)/\sqrt{n}$. 

To see this, let us consider the round-robin procedure with arbitrary ordering, and consider the set of goods that are picked in the first $t := \lfloor T/(2n) \rfloor$ rounds; let $S_t \subseteq M$ denote this set. Now, observe that 
\begin{align*}
\sum_{i=1}^n |M_i^{\ell^*} \setminus S_t| \geq T - |S_t| = T - n \cdot t \geq \frac{T}{2}.
\end{align*}
This implies that
\begin{align*}
\sum_{i=1}^n u_i(M_i^{\ell^*} \setminus S_t) \geq \frac{T}{2} \cdot 2^{-\ell^* - 1} \geq \frac{\SW_{\ell^*}(\cM)}{4} > 8\sqrt{n}.
\end{align*}
Since $u_i(M_i^{\ell^*} \setminus S_t) \leq 1$, there must be more than $8\sqrt{n}$ agents such that $M_i^{\ell^*} \nsubseteq S_t$. Let $N^*$ denote the set of such agents.

We claim that, in each of the first $t$ rounds, every agent $i \in N^*$ must receive an item she values at least $2^{-\ell^* - 1}$. The reason is that agent $i$ picks her favorite good, which she must value at least as much as the good(s) left unpicked in $M_i^{\ell^*} \setminus S_t$. Moreover, she values the latter at least $2^{-\ell^* - 1}$, so this must also be a lower bound of her utility for the former.

From the claim in the previous paragraph, we can conclude that the social welfare of the allocation produced is at least
\begin{align*}
|N^*| \cdot t \cdot 2^{-\ell^* - 1} > 8\sqrt{n} \cdot \frac{T}{4n} \cdot 2^{-\ell^* - 1} \geq \frac{\SW_{\ell^*}(\cM)}{\sqrt{n}}
\end{align*}
as desired. Note that we use the assumption $T > 2n$ to conclude that $t \geq T/(4n)$ in the first inequality above.

\underline{Case 2}: $T \leq 2n$. In this case, we will show that if we choose the ordering $\pi$ in a careful manner, then the social welfare obtained in the first round alone already suffices.

Similarly to Case 1, observe that since $\sum_{i=1}^n u_i(M_i^{\ell^*}) = \SW_{\ell^*}(\cM) > 8\sqrt{n}$, there are more than $8\sqrt{n}$ agents $i$ whose $M_i^{\ell^*}$ is non-empty. Let $N^*$ denote the set of such agents. 

We will construct the ordering $\pi$ step-by-step as follows. For $k = 1, \dots, \lceil 4\sqrt{n} \rceil$, we let $\pi(k)$ be any agent $i$ such that (1) $i$ is not yet in the ordering and (2) not all goods in $M_i^{\ell^*}$ are already picked by $\pi(1), \dots, \pi(k - 1)$. Note that such an agent exists because, at each step $k$, at most two candidate agents become invalid: the agent $i = \pi(k)$, and the agent $i'$ whose good in $M^{\ell^*}_{i'}$ is picked by $\pi(k)$. Since we start with $8\sqrt{n}$ valid candidates, even after $\lceil 4\sqrt{n} \rceil - 1$ steps, there are still valid candidate agents to be chosen from.

The remainder of the ordering can be chosen arbitrarily. We now argue that the resulting round-robin allocation has the desired social welfare. To see this, for $k = 1, \dots, \lceil 4\sqrt{n} \rceil$, observe that agent $\pi(k)$ must pick a good that is worth at least $2^{-\ell^* - 1}$ to her in the first round, since not all goods in $M_{\pi(k)}^{\ell^*}$ have been picked. As a result, the social welfare is at least
\begin{align*}
(4\sqrt{n}) \cdot 2^{-\ell^* - 1} \geq (2T/\sqrt{n}) \cdot 2^{-\ell^* - 1}  \geq \frac{\SW_{\ell^*}(\cM)}{\sqrt{n}},
\end{align*}
where the first inequality follows from $T \leq 2n$.
\end{proof}

We end this section by establishing an exact bound on the strong price of round-robin.

\begin{theorem}
\label{thm:round-robin-PoA}
The strong price of round-robin is $n^2$.
\end{theorem}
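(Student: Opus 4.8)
The plan is to establish matching upper and lower bounds of $n^2$. For the upper bound the key observation is that \emph{every} round-robin allocation, regardless of the ordering, has social welfare at least $1/n$. Granting this we are done: each agent values her bundle at most $u_i(M)=1$, so $\OPT(I)\le n$, and hence $\OPT(I)/\SW(\cM)\le n/(1/n)=n^2$ for every round-robin allocation $\cM$. To prove the observation I would single out the agent $a$ who picks first and relabel the goods so that $u_a(1)\ge u_a(2)\ge\cdots\ge u_a(m)$. Since $a$ leads every round, exactly $(t-1)n$ goods are gone before her $t$-th pick, so she can always grab a good worth at least $u_a((t-1)n+1)$ to her; summing over her picks yields $u_a(M_a)\ge\sum_{r\ge 0}u_a(rn+1)$. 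Grouping $\{1,\dots,m\}$ into consecutive blocks of size $n$ (the last possibly shorter) and noting that the first index of each block carries the largest value in that block, this sum is at least $\frac1n\sum_{j=1}^m u_a(j)=\frac1n$. This is exactly the rank-$1$ case of the averaging argument already used in the proof of Lemma~\ref{lem:roundrobin-welfare}, so little genuinely new work is needed here.

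For the lower bound I would build, for small $\delta,\epsilon>0$ and a large multiple $m$ of $n$, an instance with $n-1$ ``special'' goods $g_2,\dots,g_n$ whose valuations are \emph{chained}: agent $1$ values every good at $1/m$; for $2\le k\le n-1$, agent $k$ values $g_k$ at $1-\delta-\epsilon$, her ``runner-up'' $g_{k+1}$ at $\delta$, and each of the remaining goods at $\epsilon/(m-2)$; and agent $n$ values $g_n$ at $1-\epsilon$ and everything else at $\epsilon/(m-1)$ (with $\epsilon\ll\delta$ so the runner-up genuinely beats the junk goods). Running round-robin with the ordering $(1,2,\dots,n)$ and worst-case tie-breaking, agent $1$, being indifferent, can be made to take $g_2$; then in turn each agent $k\ge 2$ finds her favourite $g_k$ already gone and takes her runner-up $g_{k+1}$ (still available since $k+1>k$), until agent $n$ finds $g_n$ gone and is left with junk. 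After this first round all special goods are exhausted, so from then on agent $1$ accumulates goods worth $1/m$ each while every other agent accumulates near-worthless goods; totalling up gives $\SW(\cM)\le \frac1n+n(\delta+\epsilon)$, whereas the allocation handing $g_i$ to agent $i$ for $i\ge 2$ and everything else to agent $1$ shows $\OPT(I)\ge n-n(\delta+\epsilon)-(n-1)/m$. Sending $m\to\infty$ and then $\delta,\epsilon\to 0$ drives the ratio to $n^2$.

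The routine part is the upper bound, which is really a one-agent specialization of Lemma~\ref{lem:roundrobin-welfare}. The delicate step is the lower-bound construction: the chained valuations must be calibrated so that a single fixed ordering robs \emph{every} agent after the first of her most valuable good already in round~$1$, yet the optimum stays within $o(1)$ of $n$; and one must check that the round-robin dynamics really unfold as intended not just in round~$1$ but across all $m/n$ rounds, so that no later pick restores a non-negligible amount of welfare to agents $2,\dots,n$.
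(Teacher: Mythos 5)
Your proposal is correct and follows essentially the same route as the paper: the upper bound rests on the fact that the first picker always secures utility at least $1/n$ (the paper argues this via her envy-freeness toward every other agent, you via the direct block-averaging count from Lemma~\ref{lem:roundrobin-welfare}; both are valid), and the lower bound is a chained-theft instance in which an indifferent agent~$1$ takes agent~$2$'s treasured good and the loss cascades down the line while agent~$1$ collects only $1/n$. The paper's construction is a degenerate $0/1$-valuation version of yours that leans on the worst-case tie-breaking convention for all agents rather than only for agent~$1$, but the two instances are interchangeable.
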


\begin{proof}
\emph{Upper bound}: Consider an arbitrary instance. Since every agent receives utility at most $1$, the optimal social welfare is at most $n$. On the other hand, in the round-robin algorithm, the first agent gets to choose an item ahead of all other agents in every round and therefore does not envy any other agent in the resulting allocation. This implies that her utility, and hence the social welfare, is at least $1/n$. It follows that the strong price of round-robin is at most $n^2$.

\medskip

\noindent\emph{Lower bound}: Let $m$ be a large number divisible by $n$, and assume that the utilities are as follows:
\begin{itemize}
    \item $u_1(i)=\frac{1}{m}$ for all $i$.
    \item For $i=2,\dots,n$: $u_i(i-1)=1$, and $u_i(j)=0$ otherwise.
\end{itemize}

Consider the allocation that assigns good $i-1$ to agent $i$ for every $i=2,\dots,n$, and the remaining goods to agent 1. In this allocation, every agent $i\geq 2$ receives utility $1$. Agent 1 receives utility $\frac{m-n+1}{m}$, which converges to $1$ for large $m$. Therefore the social welfare converges to $n$.

On the other hand, consider the round-robin algorithm with the ordering of the agents $1,\dots,n$, and assume without loss of generality that agents always break ties in favor of goods with lower numbers. The first agent gets utility exactly $1/n$, while the remaining agents get zero utility since their only valuable good is ``stolen'' by the agent before them in the first round. Hence the social welfare is $1/n$. This means that the strong price of round-robin is $n^2$, as desired.
\end{proof}

\section{Balancedness}

In this section, we consider balancedness. We begin by establishing an asymptotically tight bound on the price of balancedness.

\begin{theorem}
\label{thm:balanced-PoS}
The price of balancedness is $\Theta(\sqrt{n})$.
\end{theorem}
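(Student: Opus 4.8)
The proof splits into a lower bound of $\Omega(\sqrt{n})$ and an upper bound of $O(\sqrt{n})$. For the lower bound, I would adapt the construction from Theorem~\ref{thm:EF1-PoS-lower}. The key structural feature forced by balancedness (rather than by EF1 as there) is that when $m$ is comparable to $n$, each agent receives roughly $m/n$ goods; so by taking $m = n$ we force each agent to get exactly one good. Then I would use essentially the same utilities: roughly $r = \lfloor\sqrt{n}\rfloor$ "concentrated" agents who each value a block of $r$ goods at $1/r$ each, plus $n - r$ "uniform" agents who value every good at $1/n$. An allocation giving each concentrated agent her whole block has welfare $\Theta(\sqrt{n})$, but this allocation is not balanced; any balanced allocation gives each agent exactly one good, so the concentrated agents contribute at most $1/r$ each and the uniform agents $1/n$ each, for total welfare $O(1)$. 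This yields the $\Omega(\sqrt n)$ bound. (One must double-check the edge arithmetic when $r^2 \ne n$, handled exactly as in Theorem~\ref{thm:EF1-PoS-lower} by letting the last concentrated agent absorb the leftover goods — but here we need the leftover block to also have size within one of the others for balancedness, which constrains the construction slightly; a clean way is to take $m$ a perfect square and $n = m$ so all blocks have equal size $\sqrt m$.)

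For the upper bound, I would show every instance admits a balanced allocation with welfare $\Omega(1/\sqrt{n})$; since $\OPT \le n$ this gives the $O(\sqrt n)$ bound. Note first that a round-robin allocation is balanced, so one route is to strengthen the analysis: Lemma~\ref{lem:roundrobin-welfare} already gives an ordering with round-robin welfare $\ge 1$, which is balanced, and $n/1 = n$ — too weak. So a more careful argument is needed. The natural approach is to partition agents by how "spread out" their optimal-allocation bundles are, analogous to the dyadic bucketing in Theorem~\ref{thm:round-robin-PoS-m}: if many agents get high-value goods, a careful round-robin ordering captures $\Omega(\sqrt n)$ of the welfare in the first round; otherwise one argues directly. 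Alternatively, and perhaps more cleanly, observe that balancedness is a weak constraint: given any allocation, one can redistribute goods to make it balanced while losing at most a bounded factor — but this is false in general, so the dyadic/round-robin route seems necessary.

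The main obstacle I expect is the upper bound. Unlike the price of round-robin (Theorem~\ref{thm:round-robin-PoS}), here we are free to use \emph{any} balanced allocation, not just round-robin outputs, so there is more slack; but the $\Omega(\sqrt n)$ lower bound forces the analysis to be genuinely tighter than the $O(n)$ bound for round-robin. I anticipate the cleanest path is: fix an optimal allocation, bucket goods dyadically by value, find the bucket $\ell^*$ carrying an $\Omega(1/\log)$-or-better share — actually we want $\Omega(1/\sqrt n)$, so we should bucket more coarsely or argue that the top $\sqrt n$ agents (by utility received in the optimal allocation) already carry welfare $\ge \OPT/\sqrt n$ minus lower-order terms, then give each of those agents one of her valuable goods in a balanced manner (handing out remaining goods arbitrarily while keeping bundle sizes within one). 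The delicate point is arguing the "top $\sqrt n$ agents" can \emph{simultaneously} each be assigned a good worth $\Omega$(their optimal utility) — a matching argument — while maintaining balancedness; I would resolve this by a greedy assignment in decreasing order of optimal utility, noting that assigning one good per agent keeps us far from violating the balance constraint since every agent ends with between $0$ and $\lceil m/n\rceil + 1$ goods after a final cleanup pass.
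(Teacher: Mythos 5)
Your lower bound is correct and is essentially the paper's: the paper reuses the instance from Theorem~\ref{thm:EF1-PoS-lower} (where $m=n$, so balancedness forces exactly one good per agent) and observes that any balanced allocation there has welfare at most $2$ while the optimum is $\lfloor\sqrt{n}\rfloor$. Your worry about the leftover block is immaterial, since only the output allocation needs to be balanced, not the high-welfare benchmark.

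The upper bound, however, has a genuine gap. Your concrete plan is to identify the top $\sqrt{n}$ agents \emph{by utility} in an optimal allocation and assign each of them a single good worth a constant fraction of her optimal utility. No matching argument can rescue this: a single good can be worth arbitrarily less than the agent's bundle. For instance, with $n$ agents and $m=n^2$ goods where agent $i$ values a private block of $n$ goods at $1/n$ each, every agent's best single good is worth $1/n$, so one good per agent yields welfare $1$, far below $\OPT/\sqrt{n}=\sqrt{n}$ (the theorem still holds here because a balanced allocation can give each agent her whole block --- which is exactly the point your method forfeits). The paper's argument instead classifies agents by the \emph{number of goods} they receive in an optimal allocation. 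At most $\sqrt{n}$ agents receive more than $m/\sqrt{n}$ goods; they contribute at most $\sqrt{n}$ to $\OPT$ in absolute terms, which is negligible once $\OPT(I)>4\sqrt{n}$ (the case $\OPT(I)\leq 4\sqrt{n}$ is dispatched by Lemma~\ref{lem:roundrobin-welfare}, since a round-robin allocation is balanced and has welfare at least $1$). Every remaining agent keeps her $\lceil m/(2n)\rceil$ most valuable goods from her bundle; since her bundle has at most $m/\sqrt{n}$ goods, this preserves at least a $1/(2\sqrt{n})$ fraction of her utility, and since $\lceil m/(2n)\rceil\leq\lfloor m/n\rfloor$ when $m\geq n$ (the case $m<n$ is handled separately, where one good per agent does suffice), the leftover goods can be redistributed to restore balancedness. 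The idea you are missing is that balancedness permits each agent to retain roughly $m/(2n)$ goods rather than one, and that the agents to discard are those with large \emph{bundles}, not those with small utilities.
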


\begin{proof}
\emph{Lower bound}: Consider the instance in Theorem~\ref{thm:EF1-PoS-lower}. The social welfare can be as high as $r=\lfloor\sqrt{n}\rfloor$, while a similar argument shows that the social welfare of any balanced allocation is at most $2$. The conclusion follows.

\medskip

Intuitively, for the upper bound, we divide the agents into two groups according to whether they receive a sufficiently large number of goods in an optimal allocation or not.
There cannot be too many agents in the first group, and therefore they cannot make a significant contribution to the optimal welfare, so we may ignore them.
For agents in the second group, we let each of them keep some number of goods that they like most; we choose this number so that it is possible to redistribute the remaining goods and obtain a balanced allocation.

\medskip

\noindent\emph{Upper bound}: 
If $\OPT(I)\leq 4\sqrt{n}$, the result follows immediately from Lemma~\ref{lem:roundrobin-welfare}.
We therefore assume that $\OPT(I)> 4\sqrt{n}$.
We claim that for any instance $I$, the maximum social welfare of a balanced allocation is always within a factor $4\sqrt{n}$ of the optimal social welfare; this claim implies the desired upper bound.    In fact, we will show that there is a balanced allocation $\mathcal{M}$ such that $\SW(\mathcal{M})\geq\frac{\OPT(I)-\sqrt{n}}{2\sqrt{n}}$; this suffices for our claim because $\frac{\OPT(I)-\sqrt{n}}{2\sqrt{n}}\geq\frac{\OPT(I)}{4\sqrt{n}}$. We consider two cases.

\underline{Case 1}: $m\geq n$. Fix an optimal allocation, and let $A$ be the set of agents who receive at least $\frac{m}{\sqrt{n}}$ goods in the optimal allocation, and $B$ the complement set of agents. Since there are at most $\sqrt{n}$ agents in $A$, they contribute at most $\sqrt{n}$ to $\OPT(I)$, so the agents in $B$ contribute at least $\OPT(I)-\sqrt{n}$. We let each agent in $B$ keep her $\left\lceil\frac{m}{2n}\right\rceil$ most valuable goods (or all of her goods, if she has fewer than this number of goods).
Note that each such agent keeps at least a  $\left\lceil\frac{m}{2n}\right\rceil/\frac{m}{\sqrt{n}} \geq \frac{1}{2\sqrt{n}}$ fraction of her goods.
Since the agents in $B$ originally have a total utility of at least $\OPT(I)-\sqrt{n}$, the utility of the kept goods is at least $\frac{\OPT(I)-\sqrt{n}}{2\sqrt{n}}$. Moreover, since $\left\lceil\frac{m}{2n}\right\rceil \leq \left\lfloor\frac{m}{n}\right\rfloor$ due to the assumption $m\geq n$, the remaining goods can be reallocated to obtain a balanced allocation, which has social welfare at least $\frac{\OPT(I)-\sqrt{n}}{2\sqrt{n}}$.

\underline{Case 2}: $m<n$. Fix an optimal allocation, and let $A$ be the set of agents who receive at least $\sqrt{n}$ goods in the optimal allocation, and $B$ the complement set of agents. Since there are at most $\sqrt{n}$ agents in $A$, they contribute at most $\sqrt{n}$ to $\OPT(I)$, so the agents in $B$ contribute at least $\OPT(I)-\sqrt{n}$.
We let each agent in $B$ keep her most valuable good (if she receives at least one good). By a similar reasoning as in Case~1, this yields a total utility of at least $\frac{\OPT(I)-\sqrt{n}}{\sqrt{n}}$. The remaining goods can be reallocated to obtain a balanced allocation, which has social welfare at least $\frac{\OPT(I)-\sqrt{n}}{\sqrt{n}}\geq \frac{\OPT(I)-\sqrt{n}}{2\sqrt{n}}$.
\end{proof}

For two agents, we give an exact bound on the welfare that can be lost due to imposing balancedness.

\begin{theorem}
\label{thm:balanced-PoS-2}
For $n=2$, the price of balancedness is $4/3$.
\end{theorem}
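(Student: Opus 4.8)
The plan is to prove matching lower and upper bounds. For the lower bound I would exhibit a small instance with $m=3$ goods. Since balancedness for $n=2$ means the two agents get bundle sizes $\{1,2\}$ or $\{2,1\}$ (with $m=3$), the constraint is simply that neither agent may receive all three goods, nor may one agent receive zero goods. I would pick utilities so that the optimal (unconstrained) allocation gives all the "valuable mass" concentrated in a way that a $1$-versus-$2$ split is forced to leave something on the table. A natural candidate: $u_1(1)=u_1(2)=u_1(3)=1/3$ and $u_2(1)=1,\ u_2(2)=u_2(3)=0$; then $\OPT=1/3+1=4/3$ by giving good $1$ to agent~$2$ and goods $2,3$ to agent~$1$ — but that allocation \emph{is} already balanced, so this instance does not work. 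Instead I would make the "big" agent want two goods: e.g. $u_2(1)=u_2(2)=1/2,\ u_2(3)=0$ and $u_1(1)=u_1(2)=\epsilon,\ u_1(3)=1-2\epsilon$ for small $\epsilon$; unconstrained optimum gives goods $1,2$ to agent~$2$ and good~$3$ to agent~$1$, welfare $\to 2$ — again already balanced ($2$ vs.\ $1$). The real obstruction must come from a \emph{larger} $m$, or from forcing the split to be genuinely unbalanced relative to what welfare wants; I would try $m=4$ with one agent valuing three particular goods highly and the other valuing the fourth, so that the welfare-optimal split is $3$-vs-$1$ but balancedness demands $2$-vs-$2$. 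Concretely: $u_1$ uniform on $\{1,2,3\}$ at $1/3$ each (zero on good $4$), $u_2(4)=1$ (zero elsewhere); optimum is $3$-vs-$1$ with welfare $1+1=2$, but a balanced $2$-vs-$2$ allocation gives agent~$1$ at most $2/3$ and agent~$2$ still $1$, total $5/3$; ratio $6/5$, not $4/3$. So I would push further — put agent~$2$'s value on a good agent~$1$ also wants — to drive the ratio to exactly $4/3$; pinning down this extremal instance is the first genuine task.

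For the upper bound I would argue that for $n=2$ every instance admits a balanced allocation of welfare at least $\tfrac34\OPT$. Here balancedness forces $\big||M_1|-|M_2|\big|\le 1$. The key idea: start from an optimal allocation $(S_1,S_2)$ with $\SW=\OPT=u_1(S_1)+u_2(S_2)$, and if it is already (nearly) balanced we are done. Otherwise one bundle, say $S_1$, is much larger; I would have agent~$1$ keep her $\lceil m/2\rceil$ (or $\lceil|S_1|/?\rceil$-adjusted) favorite goods among $S_1$ and hand the rest to agent~$2$, so the new bundle sizes differ by at most one. Because agent~$1$'s retained goods are her top $\lceil m/2\rceil$ out of at most $m$ goods she held, she keeps at least \emph{half} of $u_1(S_1)$ — in fact more, since she keeps her \emph{best} half. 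Meanwhile agent~$2$'s utility only weakly decreases from $u_2(S_2)$ because she only loses goods she was not originally assigned... wait, she gains goods, so $u_2$ does not decrease at all; the only loss is on agent~$1$'s side. Thus $\SW_{\text{bal}} \ge \tfrac12 u_1(S_1) + u_2(S_2)$. Since $u_2(S_2)\le 1$ and $u_1(S_1)\le 1$, and $\OPT=u_1(S_1)+u_2(S_2)$, the worst case is $u_1(S_1)$ as large as possible relative to $u_2(S_2)$; balancing the algebra (and using the $n=2$ normalization $u_i(M)=1$, plus the freedom to also give agent~$1$ the role of the "small" agent when advantageous) should yield the ratio $\OPT/\SW_{\text{bal}}\le 4/3$. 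I expect the clean way is: WLOG $u_1(S_1)\le u_2(S_2)$, note $u_2(S_2)\ge 1/2$, and show a balanced allocation keeps agent~$1$'s best half of goods \emph{and} keeps agent~$2$ at $u_2(S_2)$, then optimize the resulting bound over $u_1(S_1)\in[0,1/2]$.

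The main obstacle I anticipate is the upper-bound case analysis when $m$ is \emph{small} or \emph{odd}, where "keep your favorite half" does not cleanly preserve half the utility (e.g.\ $m=1$, or $m=3$ with the big bundle of size $3$ shrinking to $2$), and where the welfare-optimal split may already be the only balanced option. One must handle the parity of $m$ and the case $|S_1|-|S_2|=1$ separately, and argue that when agent~$1$ can keep $\lceil m/2\rceil$ goods she retains a $\lceil m/2\rceil/|S_1| \ge$ (something $\ge 1/2$)-fraction of her value. Matching this to the lower-bound instance's ratio of exactly $4/3$ — i.e.\ verifying the bound is tight and not merely $\le 4/3$ — is where I would spend the most care, likely by identifying the extremal configuration as $m=2$ with $u_1=(\tfrac12,\tfrac12)$, $u_2=(1,0)$ or a perturbation thereof and checking the balanced-welfare optimum there equals $\tfrac34\OPT$.
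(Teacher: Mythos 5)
Your proposal has genuine gaps in both directions. For the lower bound you never arrive at a working instance: you correctly sense that small $m$ cannot force the ratio up to $4/3$, but the construction you need takes $m\to\infty$. The paper's instance is $u_1(1)=1$ (zero elsewhere) and $u_2(j)=1/m$ for all $j$, with $m$ large and even: the optimum is the extremely unbalanced $1$-vs-$(m-1)$ split with welfare tending to $2$, while any balanced allocation caps agent~$2$ at exactly $\tfrac{m}{2}\cdot\tfrac{1}{m}=\tfrac12$, so its welfare is at most $3/2$. None of your candidates has this ``one agent concentrated, the other uniformly spread'' structure, which is what makes balancedness genuinely costly.

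For the upper bound, the inequality you derive, $\SW_{\mathrm{bal}}\ge \tfrac12 u_1(S_1)+u_2(S_2)$, only yields a ratio of $2$ in the worst case of the analysis (take $u_1(S_1)=1$, $u_2(S_2)=0$), and your proposed fix of assuming $u_1(S_1)\le u_2(S_2)$ does not help, because the agent with the smaller \emph{utility} need not be the agent with the larger \emph{bundle}, so you cannot simultaneously rebalance and keep agent~$2$ at $u_2(S_2)$. More seriously, the selection rule itself---agent~$1$ keeps her $\lceil m/2\rceil$ favorite goods by $u_1$-value---is wrong, not merely loosely analyzed. Counterexample: let $m=10$ and $\delta>0$ be small; set $u_1(j)=1/9+\delta$ for $j\le 5$, $u_1(j)=1/9-5\delta/4$ for $6\le j\le 9$, $u_1(10)=0$, and $u_2(j)=1/10$ for $j\le 5$, $u_2(j)=0$ for $6\le j\le 9$, $u_2(10)=1/2$. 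Here $\OPT=3/2$ (agent~$1$ takes goods $1$--$9$), but your rule keeps goods $1$--$5$ for agent~$1$ and hands goods $6$--$9$, which are worthless to agent~$2$, across the table, giving welfare about $5/9+1/2=19/18$ and ratio $27/19>4/3$. The missing idea is that the transferred goods must be chosen to minimize the \emph{difference} $u_1(j)-u_2(j)$, not $u_1(j)$: pad with a dummy good so $m$ is even, sort the goods by this difference in decreasing order, give agent~$1$ the first $m/2$, and write $\OPT=1+\Delta$ with $\Delta=u_1(S_1)-u_2(S_1)\ge 0$. At most half of the goods of $S_1$ are transferred and they are the ones with the smallest (nonnegative) differences, so the welfare lost relative to $\OPT$ is at most $\Delta/2$, giving $\SW_{\mathrm{bal}}\ge 1+\Delta/2$ and a ratio of at most $(1+\Delta)/(1+\Delta/2)\le 4/3$.
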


\begin{proof}
\emph{Lower bound}: Let $m$ be a large even number, and assume that the utilities are as follows:
\begin{itemize}
    \item $u_1(1)=1$ and $u_1(i)=0$ otherwise.
    \item $u_2(i)=\frac{1}{m}$ for all $i$.
\end{itemize}
Consider the allocation that assigns the first good to the first agent and the remaining goods to the second agent. The social welfare is $1+(1-1/m)$, which converges to $2$ for large $m$. On the other hand, in any balanced allocation, the first agent gets utility at most $1$ while the second agent gets utility $\frac{m}{2}\cdot\frac{1}{m}=\frac{1}{2}$, so the social welfare is at most $3/2$. Hence the price of balancedness is at least $4/3$.

\medskip

\noindent\emph{Upper bound}: Consider an arbitrary instance. If $m$ is odd, we may add a dummy good that yields zero utility to both agents---this does not change the optimal social welfare or the maximum social welfare of a balanced allocation. We may therefore assume that $m$ is even.

Sort the goods so that $u_1(1)-u_2(1)\geq u_1(2)-u_2(2)\geq\dots\geq u_1(m)-u_2(m)$. Let $s$ be the last good such that $u_1(s)-u_2(s)\geq 0$, and assume without loss of generality that $s\geq m/2$. An optimal allocation assigns the set of goods $S_1=\{1,\dots,s\}$ to the first agent and the complement set $S_2$ to the second agent, yielding social welfare $u_1(S_1)+u_2(S_2)=u_1(S_1)+(1-u_2(S_1))=1+\Delta$, where $\Delta:=u_1(S_1)-u_2(S_1)\geq 0$. On the other hand, consider the balanced allocation that assigns goods $1,\dots,m/2$ to the first agent and the remaining goods to the second agent. Note that at most half of the goods in $S_1$ are reallocated to the second agent, and these are the goods with the lowest difference in utility between the two agents. Hence, the utility loss going from the first to the second allocation is at most $\Delta/2$, implying that the social welfare of the second allocation is at least $1+\frac{\Delta}{2}$. The price of balancedness is therefore at most
$$\sup_{0\leq\Delta\leq 1}\frac{1+\Delta}{1+\frac{\Delta}{2}}.$$
This ratio is increasing in $\Delta$ and reaches the maximum at $\Delta=1$, where its value is $4/3$, completing the proof.
\end{proof}

Finally, the same construction as in Theorem~\ref{thm:EF1-EFX-PoA} shows that balanced allocations can have arbitrarily bad welfare.

\begin{theorem}
The strong price of balancedness is $\infty$.
\end{theorem}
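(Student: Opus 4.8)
The plan is to reuse verbatim the instance from Theorem~\ref{thm:EF1-EFX-PoA}, observing only that the bad allocation there is already balanced. Concretely, I would take $m = n$, with $u_i(i) = 1$ for every agent $i$ and $u_i(j) = 0$ for $j \neq i$. The ``diagonal'' allocation giving good $i$ to agent $i$ has social welfare $n > 0$, so $\OPT(I) = n$ (in fact $\OPT(I) = n$ exactly since each agent's utility is capped at $1$).

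Next I would exhibit a balanced allocation of welfare $0$. Since $m = n$, balancedness simply requires that every agent receive exactly one good, i.e., the allocation corresponds to a permutation $\sigma$ of $N$ with agent $i$ receiving good $\sigma(i)$. Picking any derangement — for instance $\sigma(i) = i-1$ for $i \geq 2$ and $\sigma(1) = n$, exactly the allocation used in Theorem~\ref{thm:EF1-EFX-PoA} — gives each agent a good she values at $0$, so the social welfare of this balanced allocation is $0$.

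Therefore the strong price of balancedness for this instance is $\OPT(I)/0 = \infty$ (or, if one prefers to avoid division by zero, one perturbs: replace the utilities by $u_i(i) = 1 - (n-1)\epsilon$ and $u_i(j) = \epsilon$ for $j \neq i$; then the diagonal allocation still has welfare converging to $n$, the derangement allocation is still balanced but now has welfare $n\epsilon \to 0$, so the ratio is unbounded as $\epsilon \to 0$). Either way the overall strong price of balancedness, being the supremum over all instances, is $\infty$.

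There is essentially no obstacle here: the only thing to check carefully is that the welfare-$0$ (or welfare-$n\epsilon$) allocation genuinely satisfies the balancedness constraint $\bigl||M_i| - |M_j|\bigr| \le 1$, which is immediate because each bundle has size exactly $1$ when $m = n$. The proof can thus be stated in two or three sentences, pointing back to the construction in the proof of Theorem~\ref{thm:EF1-EFX-PoA}.
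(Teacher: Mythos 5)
Your proof is correct and is exactly the argument the paper intends: it states only that ``the same construction as in Theorem~\ref{thm:EF1-EFX-PoA} shows that balanced allocations can have arbitrarily bad welfare,'' i.e., the derangement allocation there is balanced (each bundle has size one since $m=n$) and has welfare $0$ while $\OPT(I)=n$. Your additional remark about perturbing the utilities to avoid division by zero is a reasonable precaution but not needed, since the supremum defining the strong price is already unbounded.
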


\section{Welfare Maximizers}

In this section, we consider allocations that maximize different measures of welfare. To start with, we show that every MNW and leximin allocation yields a decent welfare.

\begin{lemma}
\label{lem:MNW-leximin-welfare}
For any instance, every MNW allocation and every leximin allocation has social welfare at least $1$, and both bounds are tight.
\end{lemma}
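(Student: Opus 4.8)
The plan is to prove two separate statements: every MNW allocation has social welfare at least $1$, and every leximin allocation has social welfare at least $1$; then exhibit a single instance where both bounds are attained.

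For the leximin part, I would argue directly. Consider the allocation $\mathcal{M}^*$ produced by the round-robin algorithm with an ordering that achieves welfare at least $1$ (Lemma~\ref{lem:roundrobin-welfare}). Actually a cleaner route: a leximin allocation maximizes the egalitarian welfare $\min_i u_i(M_i)$, and among those, the sorted utility vector is lexicographically maximal. Compare a leximin allocation $\mathcal{L}$ to the allocation in which each agent $i$ receives the whole set $M$ — infeasible, but it tells us each agent could in principle get $1$. Instead, the right comparison is: by Lemma~\ref{lem:roundrobin-welfare} there is an allocation $\mathcal{R}$ with every agent getting utility at least $1/n$ (each agent gets $1/n$ in expectation, but more simply, the round-robin allocation gives agent $i$ at least her ``every $n$-th favorite good'' sum, which is at least $1/n$). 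Hence the leximin-optimal egalitarian welfare is at least $1/n$, so in a leximin allocation $\mathcal{L}$ every agent has $u_i(L_i) \geq 1/n$; summing over $n$ agents gives $\SW(\mathcal{L}) \geq 1$. I would double-check that round-robin indeed guarantees each agent $\geq 1/n$ individually (not just in expectation) — this follows because the agent, picking $j$-th or later and at worst every $n$-th good thereafter in her own ranking, collects a subsequence whose value is at least $\frac{1}{n}u_i(M) = \frac{1}{n}$; this is essentially the per-agent bound already used inside the proof of Lemma~\ref{lem:roundrobin-welfare}.

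For the MNW part, the standard fact (cited via \citet{CaragiannisKuMo16}) is that an MNW allocation is EF1 and Pareto optimal, but EF1 alone does not give welfare $\geq 1$ (Theorem~\ref{thm:EF1-EFX-PoA}). So I would instead compare an MNW allocation $\mathcal{M}$ directly against the round-robin allocation $\mathcal{R}$ from above, which has $u_i(R_i) \geq 1/n$ for every $i$, hence Nash welfare at least $n^{-n} > 0$. Since the maximum Nash welfare is positive, every agent gets positive utility in $\mathcal{M}$, and by the AM--GM inequality together with the fact that the product $\prod_i u_i(M_i)$ is maximized, I want to conclude $\sum_i u_i(M_i) \geq 1$. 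The cleanest argument: suppose for contradiction $\SW(\mathcal{M}) < 1$. Then some agent $k$ has $u_k(M_k) < 1/n$. I would then find a good to transfer (or a whole bundle swap) that increases the Nash welfare, contradicting optimality. A known argument here: since $u_k(M_k) < 1/n < \SW(\mathcal{M})/\big(\text{something}\big)$... more precisely, one shows that if $u_k(M_k)$ is small then moving the most valuable good (for $k$) from the bundle of the agent who currently ``overvalues'' relative to the Nash-welfare marginal condition strictly increases the product. I expect this transfer/exchange argument — identifying exactly which good to move and verifying the product strictly increases — to be the main obstacle, since one must handle the marginal-utility characterization of MNW carefully. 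An alternative and perhaps simpler route for the MNW case is: the round-robin allocation gives Nash welfare $\geq (1/n)^n$; but more importantly, it is known that an MNW allocation's utility vector is not Pareto-dominated, and one can invoke the AM--GM-type bound $\prod_i u_i(M_i) \leq (\SW(\mathcal{M})/n)^n$ against a lower bound on the maximum Nash welfare of at least $1/n^n$ achieved by round-robin, but that only yields $\SW(\mathcal{M}) \geq 1$, which is exactly what we want. I would use this AM--GM comparison: $\left(\frac{\SW(\mathcal{M})}{n}\right)^n \geq \prod_i u_i(M_i) \geq \prod_i u_i(R_i) \geq \left(\frac{1}{n}\right)^n$, hence $\SW(\mathcal{M}) \geq 1$.

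Finally, for tightness of both bounds, I would use the instance with $m = 1$ (a single good) and $n \geq 2$ agents each valuing that good at $1$. Every allocation gives the good to one agent, so $u_i(M_i) = 1$ for that agent and $0$ for the rest; the Nash welfare is $0$ for every allocation, and every allocation is (vacuously, by the tie-breaking convention in the MNW definition's footnote) an MNW allocation, with social welfare exactly $1$; likewise the egalitarian welfare is $0$ for every allocation so every allocation is leximin, again with welfare $1$. Thus neither bound can be improved. I would present the leximin argument first, then the MNW argument, then this shared tightness example in one or two sentences.
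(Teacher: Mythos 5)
There is a genuine gap, and it sinks both halves of your argument: you rely on the claim that the round-robin allocation (for a suitable ordering) gives \emph{every} agent utility at least $1/n$, and hence that the maximum egalitarian welfare is always at least $1/n$. This is false. The per-agent bound inside the proof of Lemma~\ref{lem:roundrobin-welfare} is a bound on the \emph{expected} utility over a random ordering; deterministically, only the first picker is guaranteed $1/n$, while an agent picking $j$-th receives at least $u_i(j)+u_i(n+j)+\cdots$, which can be $0$ (take $n=m=2$ with an agent who values only good $1$ and picks second). Worse, no allocation at all can guarantee every agent $1/n$ in general: in your own tightness instance (one good valued at $1$ by all agents), every allocation has egalitarian welfare $0$, so the leximin egalitarian optimum is $0$, not $\geq 1/n$. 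Consequently your leximin argument (``every agent gets $\geq 1/n$, sum to get $1$'') fails, and your AM--GM chain for MNW fails at the step $\prod_i u_i(R_i)\geq (1/n)^n$, since the maximum Nash welfare can be $0$. The vaguer transfer-of-a-single-good argument you sketch for MNW is also not completed, and you correctly flag it as the main obstacle.

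The paper's proof closes exactly this gap with a bundle-rotation argument that makes no positive-utility assumption. Suppose a (MNW or leximin) allocation has $\sum_k u_k(M_k)<1$. Since $\sum_k u_i(M_k)=u_i(M)=1$ for each $i$, every agent $i$ must have some $j\neq i$ with $u_i(M_j)>u_j(M_j)$. Drawing an edge $i\rightarrow j$ for each such pair gives a digraph in which every vertex has out-degree at least one, hence a directed cycle; reassigning $M_j$ to $i$ along each cycle edge strictly improves the multiset of utilities (each affected bundle's value to its new owner exceeds its value to its old owner, and nobody else changes). This either reduces the number of zero-utility agents or increases the product of the positive utilities, contradicting MNW, and it lexicographically improves the sorted utility vector, contradicting leximin. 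If you want to salvage your structure, replace both of your welfare lower bounds with this single cycle argument; your tightness instance is fine and matches the paper's.
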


\begin{proof}
We first establish the bound for MNW. Consider any MNW allocation where agent $i$ receives bundle $M_i$, and assume for contradiction that $\sum_{k=1}^n u_k(M_k)<1$. Fix any agent $i$.
Since the agent has utility $1$ for the entire set of items, we have $\sum_{k=1}^n u_i(M_k)=1$.
If $u_i(M_k)\leq u_k(M_k)$ for all $k=1,\dots,n$, we would have $$1 = \sum_{k=1}^n u_i(M_k)\leq \sum_{k=1}^n u_k(M_k)<1,$$
a contradiction, so there exists $j\neq i$ such that $u_i(M_j)>u_j(M_j)$. 
Construct a directed graph with vertices $1,2,\dots,n$, and add an edge from $i$ to $j$ if $u_i(M_j)>u_j(M_j)$. 
From the above observation, every vertex has at least one outgoing edge, implying that the graph consists of a directed cycle. For every edge $i\rightarrow j$ in the cycle, we give $M_j$ to agent $i$ instead of agent $j$. If we consider the change in the multiset of the $n$ utilities between the old and new allocations, at least one number increases while others remain the same. This means that either we have decreased the number of agents who get zero utility, or keep this number fixed and increase the product of utilities of the agents who get nonzero utility. Either case contradicts the definition of an MNW allocation.

To show the bound for leximin, we apply the same argument. An improvement in the multiset of utilities as described in the last step contradicts the definition of leximin.

Finally, the tightness of the bounds follows from the instance where every agent has utility $1$ for the same good.
\end{proof}

Lemma~\ref{lem:MNW-leximin-welfare} allows us to show that the price of MNW and the strong price of MNW, the price of MEW, and both prices of leximin are of linear order.

\begin{theorem}
\label{thm:MNW-PoS-PoA}
The price of MNW, the strong price of MNW, the price of MEW, the price of leximin, and the strong price of leximin are $\Theta(n)$.
\end{theorem}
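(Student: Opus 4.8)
The plan is to establish matching $O(n)$ upper bounds and $\Omega(n)$ lower bounds for each of the five quantities. For the upper bounds, the argument is uniform and short. In every instance the optimal social welfare is at most $n$, since each agent has utility at most $1$ for her own bundle (indeed for the whole set $M$). By Lemma~\ref{lem:MNW-leximin-welfare}, every MNW allocation and every leximin allocation has social welfare at least $1$. Hence for MNW we get $\OPT(I)/\SW(\mathcal{M}) \le n$ for \emph{any} MNW allocation $\mathcal{M}$, which bounds both the price and the strong price of MNW by $n$; the same reasoning bounds both the price and the strong price of leximin by $n$. For MEW we need one extra observation: a leximin allocation is in particular an MEW allocation (it maximizes egalitarian welfare by definition), so the maximum welfare over MEW allocations is at least the welfare of some leximin allocation, which is at least $1$ by Lemma~\ref{lem:MNW-leximin-welfare}; thus the price of MEW is at most $n$. (Note we only claim a linear \emph{price} of MEW here—the strong price of MEW is handled separately, elsewhere in the paper, and is in fact infinite for $n \ge 3$.)

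For the lower bounds, since leximin allocations are MEW allocations and, by \citet{CaragiannisKuMo16}, MNW allocations are Pareto optimal and EF1, it suffices to exhibit instances forcing welfare $O(1)$ in a \emph{leximin} allocation and in an \emph{MNW} allocation respectively; because the price and strong price coincide for MNW and for leximin in our bounds, a single bad instance for each notion suffices. I would build an instance on $n$ agents and roughly $n$ (or a few more) goods in which one agent, say agent $1$, values a large number of goods uniformly and at low value, while each other agent $i$ concentrates nearly all her value on a single good $g_i$, with a small positive spillover so that utilities are additive and the fairness notion is forced to hand $g_i$ to someone other than agent $i$. Concretely, the instance used for the round-robin lower bound (Theorem~\ref{thm:round-robin-PoS} or Theorem~\ref{thm:round-robin-PoA}) has exactly this flavor: there an optimal allocation achieves welfare $\to n$, while the fairness-constrained allocation is forced to welfare $1$. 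The task is to verify that the \emph{unique} MNW allocation (resp. leximin allocation) on a suitably perturbed version of that instance is the bad one—i.e., that maximizing the product (resp. the sorted utility vector lexicographically) forces an egalitarian split giving each agent utility $\approx 1/n$, yielding total welfare $\approx 1$ against an optimum $\approx n$, hence a ratio $\Theta(n)$.

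The main obstacle is the lower-bound construction, and specifically pinning down the MNW and leximin \emph{optimizers} exactly rather than merely exhibiting one bad fair allocation. For the strong price it would be enough to show \emph{some} MNW allocation has low welfare, but for the (non-strong) price we must show \emph{every} MNW allocation has low welfare, i.e., that the welfare-$\Theta(n)$ allocation is not itself MNW. The natural fix is to design utilities so that the product of utilities (and the leximin vector) is strictly maximized by the near-equal split: give agent $1$ many goods of tiny value and each other agent $i$ one "signature" good worth almost $1$ to agent $i$ and almost $0$ to everyone else, so that any allocation giving agent $i$ her own signature good leaves some other agent with utility $0$ (killing the Nash product) or very small utility (hurting leximin), whereas rotating signature goods and splitting agent $1$'s goods appropriately gives every agent a comparable positive utility. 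One must check additivity is respected, that the claimed allocation is genuinely MNW/leximin (a short case analysis on which agent holds which signature good), and that its welfare is $O(1)$ while $\OPT = \Omega(n)$. Once that instance is in hand, the $\Omega(n)$ lower bounds for MNW (price and strong price), leximin (price and strong price), and MEW (price, via the leximin-is-MEW containment) all follow, and combined with the $O(n)$ upper bounds above we get $\Theta(n)$ in every case.
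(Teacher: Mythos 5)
Your upper bounds are complete and correct, and your route to the $O(n)$ bound on the price of MEW (leximin allocations are MEW allocations and have welfare at least $1$ by Lemma~\ref{lem:MNW-leximin-welfare}) is a clean alternative to the paper's, which instead reruns the cycle-swap argument of that lemma starting from a maximum-welfare MEW allocation. The lower bound, however, is where the real content lies, and your proposal stops at a plan with a flawed starting point. The instance you sketch---agent $1$ with many uniformly tiny goods, each other agent $i$ with a signature good ``worth almost $1$ to agent $i$ and almost $0$ to everyone else''---does not work as described: in such an instance the allocation giving each agent her own signature good (and the rest to agent $1$) gives everyone positive utility and is simultaneously optimal, MNW, and leximin, so the price is $1$. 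What is needed, and what the paper's construction supplies, is an overlap structure between consecutive agents: agent $1$ values only good $1$; agent $i$ (for $i\ge 2$) values good $i-1$ at $1-\epsilon$ and good $i$ at $\epsilon$. Then the only allocation giving every agent positive utility is the diagonal one (agent $1$ must take good $1$, which forces agent $2$ onto good $2$, and so on down the chain), with welfare $1+(n-1)\epsilon$, while assigning good $i-1$ to agent $i$ yields welfare $(n-1)(1-\epsilon)$. Uniqueness of the MNW, leximin, and MEW optimizers then comes for free because every other allocation zeroes out some agent. Your sketch gestures at this (``rotating signature goods''), but the chain of shared values is the missing idea, not a routine verification.

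A second, logical issue: your claim that the $\Omega(n)$ lower bound on the \emph{price} of MEW follows ``via the leximin-is-MEW containment'' is backwards. That containment shows that \emph{some} MEW allocation (a leximin one) is bad, which lower-bounds the \emph{strong} price of MEW; for the price you must show that the \emph{best} MEW allocation has welfare $O(1)$. In the chain instance this does hold, but only because the MEW allocation is unique there (only the diagonal allocation attains egalitarian welfare $\epsilon>0$)---a property of the instance, not a consequence of the containment.
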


\begin{proof}
We start with MNW. It suffices to show that the price of MNW is $\Omega(n)$ and the strong price of MNW is $O(n)$.

\medskip

\noindent\emph{Lower bound}: Let $m=n$ and $0<\epsilon<1$, and assume that the utilities are as follows:
\begin{itemize}
    \item $u_1(1)=1$ and $u_1(j)=0$ otherwise.
    \item For $i=2,\dots,n$: $u_i(i-1)=1-\epsilon$, $u_i(i)=\epsilon$, and $u_i(j)=0$ otherwise.
\end{itemize}

Consider the allocation that assigns good $i-1$ to agent $i$ for $i=2,\dots,n$, and good $n$ to agent 1. The social welfare of this allocation is $(n-1)(1-\epsilon)$. On the other hand, the unique MNW allocation assigns good $i$ to agent $i$ for every $i$. The social welfare of this allocation is $1+(n-1)\epsilon$. Taking $\epsilon\rightarrow 0$, we find that the price of MNW is $\Omega(n)$.

\medskip

\noindent\emph{Upper bound}: Consider an arbitrary instance. Since every agent receives utility at most $1$, the optimal social welfare is at most $n$. On the other hand, by Lemma~\ref{lem:MNW-leximin-welfare}, the social welfare of any MNW allocation is at least $1$. The conclusion follows.

\medskip

Notice that in the lower bound instance above, the unique MNW allocation is also the unique MEW allocation as well as the unique leximin allocation.
This means that the price of MEW, the price of leximin, and the strong price of leximin are all $\Omega(n)$.

It remains to show that the price of MEW and the strong price of leximin are $O(n)$.
For leximin, this follows from Lemma~\ref{lem:MNW-leximin-welfare} and the fact that the optimal social welfare is at most $n$.
We claim that for any instance, there exists a MEW allocation with social welfare at least $1$. To prove this claim, we apply the same argument as in Lemma~\ref{lem:MNW-leximin-welfare}, but starting with a MEW allocation with maximum social welfare. An improvement in the multiset of utilities as described in the argument does not decrease the egalitarian welfare and strictly increases the social welfare, which gives us the desired contradiction.
\end{proof}

Surprisingly, MEW allocations can be arbitrarily bad when there are at least three agents.

\begin{theorem}
\label{thm:MEW-PoA}
For $n>2$, the strong price of MEW is infinite.
\end{theorem}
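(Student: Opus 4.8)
The plan is to exhibit, for every $n \geq 3$, a family of instances on which a maximum egalitarian welfare allocation is forced to have social welfare that tends to $0$ while $\OPT$ stays bounded away from $0$ (in fact close to $2$). The key structural idea is to create a bottleneck agent whose only way to obtain positive utility costs the other agents nearly all of their value, combined with a ``decoy'' structure that makes the egalitarian-optimal value very small but nonzero, so that every MEW allocation is pinned to a configuration with tiny utilitarian welfare. Concretely, I would use a small number of goods $m$ with one agent (say agent $1$) valuing all goods uniformly at $1/m$, a second agent whose favorite good is also lightly valued by someone else, and a third agent set up so that the unique way to push the minimum utility above $0$ is an allocation in which agents $1$ and $2$ receive only negligibly-valued goods. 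For $n > 3$ one pads with additional agents who each have a private good worth $1$ to them and $0$ to everyone else, which forces those goods out of contention without affecting the welfare ratio.

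The steps, in order, are: (i) write down the instance explicitly, parametrized by a large integer $m$ and/or a small $\epsilon>0$; (ii) compute $\OPT$ for the instance by describing a natural allocation and arguing (since each agent's utility is at most $1$ and we can get close to two agents at utility $\approx 1$) that $\OPT$ is bounded below by a constant near $2$; (iii) determine the maximum egalitarian welfare $E^*$ of the instance, showing it equals some tiny value like $1/m$ or $\epsilon$ that is forced by the bottleneck agent; (iv) argue that \emph{every} allocation achieving egalitarian welfare $E^*$ must give the bottleneck-creating agents only goods of negligible value, hence has social welfare $O(1/m)$ or $O(\epsilon)$; (v) conclude that the strong price of MEW is at least $\OPT / (\text{that tiny welfare}) \to \infty$ as $m \to \infty$ (or $\epsilon \to 0$), and note the padding argument handles all $n \geq 3$ uniformly.

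The main obstacle I anticipate is step (iv): it is easy to build an instance where \emph{some} MEW allocation is bad, but for the \emph{strong} price of fairness I must ensure that \emph{no} MEW allocation is good — i.e., the MEW constraint must genuinely rule out every high-welfare allocation, not merely coexist with a bad one. This requires the egalitarian optimum $E^*$ to be achieved \emph{only} by the degenerate low-welfare configurations, which is delicate: if $E^*$ is exactly $0$, then every allocation is MEW (since maximizing the minimum utility is trivially satisfied) and the claim collapses — indeed this is precisely why the $n=2$ bound is only $3/2$, and why one needs a third agent to make $E^* > 0$ achievable in exactly one (bad) way. So the construction must be tuned so that $E^* = \delta > 0$ for some small $\delta$, that attaining $\delta$ forces a specific near-unique bundle structure, and that this structure has welfare $\Theta(\delta)$; checking that no clever alternative allocation attains $\delta$ with larger welfare is the crux of the argument. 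A secondary, routine obstacle is verifying the $\OPT$ lower bound and the padding-agent bookkeeping, but those are straightforward given the normalization $u_i(M)=1$.
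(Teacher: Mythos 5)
There is a genuine gap here, and it stems from a misconception you state explicitly: you claim that if the maximum egalitarian welfare $E^*$ equals $0$, then ``every allocation is MEW \ldots and the claim collapses.'' In fact the opposite is true: for the \emph{strong} price of fairness, a vacuous fairness constraint is the worst possible situation, because the minimum welfare over MEW allocations then equals the minimum welfare over \emph{all} allocations. The paper's proof exploits exactly this degeneracy. It takes $m=n$, lets agent $1$ value only good $1$, and lets agent $i$ (for $i\geq 2$) value only good $i-1$. Since agents $1$ and $2$ both want only good $1$, some agent gets utility $0$ in every allocation, so $E^*=0$ and every allocation is MEW; and because $n\geq 3$, the cyclic shift that gives good $i+1$ to agent $i$ (and good $1$ to agent $n$) gives every agent utility $0$, i.e., there is a MEW allocation with social welfare $0$ while $\OPT\geq n-1>0$. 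This is why the theorem needs $n\geq 3$: with two agents and the normalization $u_i(M)=1$, an allocation and its swap have complementary utilities, which forces every MEW allocation to have welfare at least $1$ regardless of whether $E^*=0$ (this, not the vacuousness of MEW, is what makes the $n=2$ bound finite).

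Your alternative route---engineering $E^*=\delta>0$ so that $\delta$ is attained only by allocations of welfare $O(\delta)$---is not obviously impossible (any MEW allocation has welfare between $n\delta$ and $n$, so a ratio of order $1/\delta$ is not ruled out a priori), but you have not exhibited an instance, and the step you yourself identify as the crux (that \emph{no} allocation attains egalitarian welfare $\delta$ with large utilitarian welfare) is precisely the part that is delicate and unproven. As written, the proposal is a plan with its central verification missing, and it forgoes the one-line argument that the degenerate $E^*=0$ case makes available.
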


\begin{proof}
Let $m=n$, and assume that the utilities are as follows:
\begin{itemize}
	\item $u_1(1)=1$ and $u_1(j)=0$ otherwise.
	\item For $i=2,\dots,n$: $u_i(i-1)=1$ and $u_i(j)=0$ otherwise.
\end{itemize}

Observe that in any allocation, some agent does not get a desired good. This means that every allocation has egalitarian welfare 0, and all allocations are MEW. Now, there exists an allocation with social welfare 0, for example the allocation that assigns good $i+1$ to agent $i$ for $i=1,\dots,n-1$, and assigns good 1 to agent $n$. Since there also exists an allocation with positive social welfare, the strong price of MEW is infinite.
\end{proof}

We now turn to the case of two agents. For MNW, we establish almost tight bounds on both prices of fairness.

\begin{theorem}
\label{thm:MNW-PoS-PoA-2}
For $n=2$, the price of MNW and the strong price of MNW are at least $27/23\approx 1.174$ and at most $5/4=1.25$.
\end{theorem}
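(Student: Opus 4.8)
The two bounds are proved by independent arguments, and I would organize the proof accordingly.

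\textbf{Upper bound ($5/4$).} Fix an arbitrary MNW allocation $\cM=(M_1,M_2)$ and write $a:=u_1(M_1)$, $b:=u_2(M_2)$, so $\SW(\cM)=a+b$; fix also an optimal allocation $(O_1,O_2)$ and write $p:=u_1(O_1)$, $q:=u_2(O_2)$, so $\OPT=p+q$. The crux is the elementary inequality $pq\ge p+q-1$, obtained by rearranging $(1-p)(1-q)\ge 0$, which gives $\OPT-1\le pq$. Since an MNW allocation maximizes the product of utilities, $ab\ge pq$ (the degenerate case in which the maximum Nash welfare is $0$ is trivial: then no allocation, in particular not the optimal one, gives both agents positive utility, so $\OPT\le 1$, while $\SW(\cM)\ge 1$ by Lemma~\ref{lem:MNW-leximin-welfare}). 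Hence $\OPT-1\le ab\le\left(\tfrac{a+b}{2}\right)^2$ by AM--GM, i.e. $\OPT\le 1+\SW(\cM)^2/4$. Writing $s:=\SW(\cM)$, Lemma~\ref{lem:MNW-leximin-welfare} gives $s\ge 1$, and of course $s\le\OPT\le 2$, so $\OPT/\SW(\cM)\le(1+s^2/4)/s=1/s+s/4$; this function has derivative $-1/s^2+1/4\le 0$ on $[1,2]$, hence is maximized at $s=1$ with value $5/4$. As the argument applies to every MNW allocation, it bounds the price and the strong price simultaneously.

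\textbf{Lower bound ($27/23$).} The plan is to exhibit a three-good instance (with a perturbation parameter $\epsilon\to 0$) whose unique MNW allocation has welfare a factor $27/23$ below the optimum. I would take $u_1=(\pi+\epsilon,\,1-\pi-\epsilon,\,0)$ for a parameter $\pi\in(0,1)$ and a valuation $u_2=(q_1,q_2,q_3)$ to be determined, aiming for the MNW allocation to be $(\{1\},\{2,3\})$ while the optimal allocation is $(\{1,2\},\{3\})$; since $u_1(3)=0$, the welfare gap between these two allocations is exactly $u_1(2)-u_2(2)$. Requiring $(\{1\},\{2,3\})$ to weakly maximize Nash welfare reduces, given $u_1(3)=0$ and $q_3>0$, to dominating $(\{1,2\},\{3\})$ and $(\{2\},\{1,3\})$ (the remaining partitions are then automatically dominated), and making both of those inequalities tight pins $q_1,q_2,q_3$ down as specific functions of $\pi$; a short computation then yields $\OPT/\SW(\cM)=1/(1-\pi^2+\pi^3)$. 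Minimizing $1-\pi^2+\pi^3$ over $(0,1)$ gives $\pi=2/3$ with value $23/27$, hence ratio $27/23$; concretely this corresponds to $u_2=(4/7,\,1/7,\,2/7)$, with $\OPT=9/7$ (attained by $(\{1,2\},\{3\})$) and $\SW(\cM)=23/21+\epsilon$. The perturbation $\epsilon>0$ serves only to break ties: at $\epsilon=0$ the maximum Nash welfare is attained by three allocations, so a small perturbation of agent $1$'s valuation is needed to make $(\{1\},\{2,3\})$ the unique MNW allocation; the unperturbed instance already shows the strong price is at least $27/23$, since the worst of those three allocations has welfare $23/21$ against $\OPT=9/7$.

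\textbf{Main obstacle.} The upper bound is routine once one notices $pq\ge p+q-1$. The real work is in the lower bound: discovering the right one-parameter family, and, more delicately, verifying that the intended allocation really is the (unique, after perturbation) Nash-welfare maximizer among \emph{all} partitions rather than being beaten by some competing one. The clean closed form $1/(1-\pi^2+\pi^3)$ together with its optimization at $\pi=2/3$, which fixes the constant at exactly $27/23$, is the heart of the construction.
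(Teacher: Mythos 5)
Your proposal is correct. The lower bound is exactly the paper's construction: the same instance $u_1=(2/3,1/3,0)$, $u_2=(4/7,1/7,2/7)$, with an $\epsilon$-perturbation that makes $(\{1\},\{2,3\})$ the unique MNW allocation (the paper perturbs agent~2's valuation rather than agent~1's, which is immaterial); your one-parameter derivation of where $27/23$ comes from is a pleasant bonus but not part of the paper's argument. The upper bound, however, takes a genuinely different (and slightly cleaner) route. The paper splits on whether $\OPT\le 5/4$, handles the small case by Lemma~\ref{lem:MNW-leximin-welfare} alone, and otherwise bounds the ratio of the optimal utilities by $x_1/x_2\le 1/(\OPT-1)\le 4$ and evaluates $\frac{x_1+x_2}{2\sqrt{x_1x_2}}=1+\frac{1}{2}\bigl(\sqrt[4]{x_1/x_2}-\sqrt[4]{x_2/x_1}\bigr)^2\le 5/4$. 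You instead use $(1-p)(1-q)\ge 0$ to get $\OPT-1\le pq\le ab\le s^2/4$ and then maximize $1/s+s/4$ over $s\in[1,2]$. Both arguments rest on the same two pillars---the MNW product dominates the optimal allocation's product, and Lemma~\ref{lem:MNW-leximin-welfare} gives welfare at least $1$---and both apply to an arbitrary MNW allocation, so they bound the price and the strong price simultaneously. Your version avoids the case split at the cost of hiding the role of the utility ratio; the paper's makes that ratio explicit. Either is a complete and correct proof.
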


\begin{proof}
It suffices to show that the price of MNW is at least $27/23$ and the strong price of MNW is at most $5/4$.

\medskip

\noindent\emph{Lower bound}: Let $m=3$ and $0<\epsilon<1/7$, and assume that the utilities are as follows:
\begin{itemize}
    \item $u_1(1)=2/3$, $u_1(2)=1/3$, $u_1(3)=0$.
    \item $u_2(1)=4/7-\epsilon$, $u_2(2)=1/7+\epsilon$, $u_2(3)=2/7$.
\end{itemize}

The optimal social welfare is $9/7$, obtained by assigning the first two goods to the first agent and the last good to the second agent. On the other hand, one can check that the maximum Nash welfare is $2/7+2\epsilon/3$, obtained (uniquely) by assigning the first good to the first agent and the last two goods to the second agent. This allocation yields social welfare $23/21+\epsilon$. Taking $\epsilon\rightarrow 0$, we find that the price of MNW is at least $27/23$.

\medskip

\noindent\emph{Upper bound}: Consider an arbitrary instance. Suppose that the optimal social welfare is $x$. If $x \leq 5/4$, then Lemma~\ref{lem:MNW-leximin-welfare} immediately implies that the price of MNW of this instance is at most $5/4$.

We now focus on the case where $x \geq 5/4$. Let us assume further that, in an optimal allocation, the first agent has utility $x_1$ and the second has utility $x_2$, where $x_1 \geq x_2$ and $x_1 + x_2 = x$. Since $x_1 \leq 1$, we have $x_1/x_2 \leq 1/(x - 1) \leq 4$.

Next, consider any MNW allocation. Suppose that in this allocation the first agent has utility $y_1$ and the second has utility $y_2$. Since the Nash welfare of this allocation must be at least that of the optimal allocation, we have $y_1y_2 \geq x_1x_2$. As a result, the social welfare of this allocation is $y_1+y_2\geq 2\sqrt{y_1y_2} \geq 2\sqrt{x_1x_2}$, where the first inequality follows from $(\sqrt{y_1}-\sqrt{y_2})^2\geq 0$. Thus, the price of MNW of this instance is at most
\begin{align*}
\frac{x_1 + x_2}{2\sqrt{x_1x_2}} &= 1 + \frac{1}{2} \cdot \left(\sqrt[4]{\frac{x_1}{x_2}} - \sqrt[4]{\frac{x_2}{x_1}}\right)^2 \\
&\leq 1 + \frac{1}{2} \cdot \left(\sqrt[4]{4} - \sqrt[4]{\frac{1}{4}}\right)^2 = 5/4,
\end{align*}
where the inequality follows from $1 \leq x_1/x_2 \leq 4$.
\end{proof}

Finally, we derive the exact bound for MEW and leximin with two agents. Note that since all leximin allocations are MEW, Theorem~\ref{thm:MEW-PoS-PoA-2} immediately implies Theorem~\ref{thm:leximin-PoS-PoA-2}.

\begin{theorem}
\label{thm:MEW-PoS-PoA-2}
For $n=2$, the price of MEW and the strong price of MEW are $3/2$.
\end{theorem}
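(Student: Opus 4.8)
The plan is to establish matching bounds of $3/2$ from above and below. For the lower bound, I would exhibit a two-agent, three-good instance (mirroring the EFX lower bound in Theorem~\ref{thm:EFX-PoS-2}) in which every MEW allocation is forced to have egalitarian welfare below $1/2$, so that the MEW condition is essentially vacuous, yet the social-welfare-maximizing allocation gives one agent utility close to $1$ and the other close to $1/2$. Concretely, take $u_1(1)=1/2+\epsilon$, $u_1(2)=1/2-\epsilon$, $u_1(3)=0$ and $u_2(1)=1/2+\epsilon$, $u_2(2)=0$, $u_2(3)=1/2-\epsilon$. Then in any allocation some agent fails to get a good she positively values together with good $1$, so the minimum utility is at most $1/2-\epsilon$, and I would argue that the MEW allocations include one with social welfare exactly $1$ (e.g.\ give good $1$ to one agent and the other two goods split so that the remaining agent gets $1/2-\epsilon$), while $\OPT = 3/2-\epsilon$. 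Letting $\epsilon\to 0$ gives the lower bound $3/2$ for both the price and the strong price. I should double-check that no MEW allocation has welfare above~$1$; this is where the asymmetry of the instance matters and needs a short case check over the finitely many partitions.

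For the upper bound on the \emph{strong} price (which subsumes the ordinary price), I would argue as in the EFX upper bound. Fix any instance and any MEW allocation $\mathcal{M}$. If $\OPT \le 3/2$, then since every agent gets utility at most $1$ and the MEW allocation has egalitarian welfare at least that of the best balanced split — in particular at least the welfare guaranteed by any feasible allocation — I would invoke the fact (from the leximin/MEW argument in Theorem~\ref{thm:MNW-PoS-PoA}) that there is a MEW allocation with welfare at least $1$, together with a direct lower bound showing \emph{every} MEW allocation on two agents has welfare at least $1$. The key observation for $n=2$: if $\mathcal{M}=(M_1,M_2)$ is MEW with $u_1(M_1)=a$, $u_2(M_2)=b$, then the egalitarian welfare is $\min(a,b)$, and since the alternative allocation $(M_2,M_1)$ is also feasible with egalitarian welfare $\min(u_1(M_2),u_2(M_1)) = \min(1-b,1-a)$, the MEW property forces $\min(a,b)\ge \min(1-a,1-b)$, i.e.\ $\max(a,b)\ge 1/2$; combined with the fact that one can always find a MEW allocation whose welfare is maximal among MEW allocations and which (by the cycle-swapping argument) has welfare at least $1$, the worst MEW allocation still has $a+b\ge 1$. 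Hence the strong price is at most $\OPT/1 \le 3/2$ once $\OPT \le 3/2$, and when both agents get at least $1/2$ in an optimal allocation that allocation is envy-free, hence MEW, giving price $1$.

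The main obstacle I anticipate is the upper bound on the strong price: unlike the ordinary price, I must show that \emph{every} MEW allocation — not merely the best one — has social welfare at least $1$ for $n=2$. The swap argument above only yields $\max(a,b)\ge 1/2$, which alone gives welfare $\ge 1/2$ and a bound of $3$, not $3/2$. To close the gap I would need to show that a MEW allocation with $a+b<1$ can be locally improved (moving a good across the cut) in a way that does not decrease $\min(a,b)$ while increasing $a+b$, contradicting maximality of social welfare \emph{within} the MEW class — but the theorem statement is about an arbitrary MEW allocation, so the resolution must instead be that any MEW allocation with $a+b<1$ is simply impossible when $n=2$. I would establish this by a short direct argument: if $\min(a,b)<1/2 \le \max(a,b)$, say $a\ge 1/2 > b$, then for the smaller agent $b = u_2(M_2) = 1 - u_2(M_1)$, so $u_2(M_1) > 1/2 > b$; transferring a single good from $M_1$ to agent~$2$ that agent~$2$ values can only raise $\min$ unless it drops $a$ below the new $b$, and a careful choice shows $a+b$ cannot in fact be below $1$ in a MEW allocation — formalizing this case analysis over two agents is the crux, and I expect it to be the longest part of the write-up.
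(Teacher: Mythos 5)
Your proposal has the right skeleton but two genuine problems. First, the lower bound instance does not work for the (ordinary) price of MEW. In your instance, the welfare-maximizing allocation---goods $1,2$ to agent $1$ and good $3$ to agent $2$---gives utilities $(1,\,1/2-\epsilon)$, so its egalitarian welfare is $1/2-\epsilon$, which is exactly the maximum egalitarian welfare achievable in that instance (no allocation reaches $1/2$, since whichever agent misses good $1$ gets at most $1/2-\epsilon$). Hence the optimal allocation is itself MEW, the price of MEW of your instance is $1$, and the instance only certifies the \emph{strong} price---which does not imply the ordinary price, as the implication runs the other way. The check you deferred is precisely the one that fails. The paper's instance repairs this by making good $1$ worth exactly $1/2$ to both agents and splitting the remaining value as $(1/2-\epsilon,\epsilon)$ for one agent and $(\epsilon,1/2-\epsilon)$ for the other, so that egalitarian welfare $1/2$ is attainable but \emph{only} by the split $\{1\}$ versus $\{2,3\}$, which has welfare $1$ while $\OPT\to 3/2$.

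Second, on the upper bound you already hold the winning card and then discard it. Your swap observation gives $\min(a,b)\ge\min(1-a,1-b)=1-\max(a,b)$, which is literally $a+b\ge 1$; it does not merely give $\max(a,b)\ge 1/2$. So \emph{every} MEW allocation on two agents has welfare at least $1$, no further case analysis about moving goods across the cut is needed, and the ``crux'' you anticipate dissolves. What you are actually missing is the other half of the ratio: you need $\OPT\le 1+x$, where $x$ is the maximum egalitarian welfare (the smaller utility in the optimal allocation is at most $x$ by definition of $x$), together with the trivial observation that every MEW allocation has welfare at least $2x$. The strong price is then at most $(1+x)/\max\{2x,1\}$, which is at most $3/2$ whether $x\le 1/2$ or $x>1/2$. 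Your fallback for $\OPT>3/2$---that the optimal allocation is envy-free ``hence MEW''---is also unsound: envy-freeness gives egalitarian welfare at least $1/2$ but not maximality, and in any case it says nothing about the \emph{worst} MEW allocation, which is what the strong price requires.
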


\begin{proof}
It suffices to show that the price of MEW is at least $3/2$ and the strong price of MEW is at most $3/2$.

\medskip

\noindent\emph{Lower bound}: Let $m=3$ and $0<\epsilon<1/2$, and assume that the utilities are as follows:
\begin{itemize}
    \item $u_1(1)=1/2$, $u_1(2)=1/2-\epsilon$, $u_1(3)=\epsilon$.
    \item $u_2(1)=1/2$, $u_2(2)=\epsilon$, $u_2(3)=1/2-\epsilon$.
\end{itemize}

The optimal social welfare is $3/2-2\epsilon$, obtained by assigning the first two goods to the first agent and the last good to the second agent. On the other hand, the maximum egalitarian welfare is $1/2$, which can be obtained only by assigning the first good to one agent and the remaining two goods to the other agent. This allocation has social welfare $1$. Taking $\epsilon\rightarrow 0$, we find that the price of MEW is at least $3/2$.

\medskip

\noindent\emph{Upper bound}: Consider an arbitrary instance, and denote by $x$ the maximum egalitarian welfare. The optimal social welfare is at most $1+x$, and the social welfare of any MEW allocation is at least $2x$. Consider any MEW allocation, and suppose that agent 1 receives utility $x$ and agent 2 receives utility $y\geq x$. In the allocation where the bundles of the two agents are swapped, the utilities are $1-x$ and $1-y\leq 1-x$. Since $x$ is the maximum egalitarian welfare, we have $x\geq 1-y$, or $x+y\geq 1$. This means that the social welfare of the original allocation is at least $1$, so the social welfare of any MEW allocation is at least $\max\{2x,1\}$.

The strong price of MEW is therefore at most
$\frac{1+x}{\max\{2x,1\}}$. If $x\leq 1/2$, this quantity is at most $\frac{1+x}{1}\leq \frac{3}{2}$. On the other hand, if $x>1/2$, this quantity is at most $\frac{1+x}{2x}=\frac{1}{2x}+\frac{1}{2}<\frac{3}{2}$. The conclusion follows.
\end{proof}

\begin{theorem}
\label{thm:leximin-PoS-PoA-2}
For $n=2$, the price of leximin and the strong price of leximin are $3/2$.
\end{theorem}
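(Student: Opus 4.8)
The plan is to obtain Theorem~\ref{thm:leximin-PoS-PoA-2} as an immediate consequence of Theorem~\ref{thm:MEW-PoS-PoA-2}, exploiting the fact recorded in Section~\ref{sec:prelim} that every leximin allocation maximizes egalitarian welfare. Thus, for every instance $I$, the set of leximin allocations of $I$ is contained in the set of MEW allocations of $I$. All that remains is to push this inclusion through both the price and the strong price, and to check that the MEW lower-bound instance already witnesses the bound for leximin.

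For the upper bounds I would argue as follows. Fix an instance $I$. Since the leximin allocations form a subset of the MEW allocations of $I$, the minimum of $\SW$ over leximin allocations is at least the minimum of $\SW$ over MEW allocations; hence the strong price of leximin for $I$ is at most the strong price of MEW for $I$, which Theorem~\ref{thm:MEW-PoS-PoA-2} bounds by $3/2$. Moreover, for any property the price is at most the strong price, because dividing $\OPT(I)$ by the maximum welfare of a satisfying allocation gives at most the result of dividing by the minimum; hence the price of leximin for $I$ is also at most $3/2$. Taking suprema over instances gives the $3/2$ upper bound for both quantities.

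For the lower bounds I would reuse the instance from the lower-bound part of the proof of Theorem~\ref{thm:MEW-PoS-PoA-2} (with $m=3$ and the stated $\epsilon$-dependent utilities). There $\OPT(I)=3/2-2\epsilon$, the maximum egalitarian welfare is $1/2$, and the only allocations attaining it are the two that give one agent good $1$ and the other agent goods $\{2,3\}$; both have utility profile $(1/2,1/2)$ and therefore social welfare exactly $1$. Consequently every leximin allocation of this instance, being a MEW allocation, has social welfare $1$, so both the price and the strong price of leximin for this instance equal $(3/2-2\epsilon)/1$, which tends to $3/2$ as $\epsilon\to0$. Combined with the upper bounds this yields the value $3/2$. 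There is no real obstacle here: the only thing to verify is that the two MEW allocations in that instance are indeed leximin and have welfare $1$, which is immediate from their common profile $(1/2,1/2)$ and was already implicit in the proof of Theorem~\ref{thm:MEW-PoS-PoA-2}; all the content lies in the inclusion ``leximin $\Rightarrow$ MEW'' together with the bounds already established for MEW.
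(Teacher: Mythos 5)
Your proposal is correct and is essentially the paper's own argument: the paper derives Theorem~\ref{thm:leximin-PoS-PoA-2} by noting that every leximin allocation is MEW, so Theorem~\ref{thm:MEW-PoS-PoA-2} gives the $3/2$ upper bounds, and the lower-bound instance there has all its MEW (hence leximin) allocations with welfare $1$. You simply spell out the containment and the verification that the paper leaves implicit.
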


\section{Pareto Optimality}

In this section, we consider Pareto optimality.
Since any allocation that maximizes social welfare is necessarily Pareto optimal, the price of Pareto optimality is trivially $1$. By establishing a tight lower bound on the welfare of a Pareto optimal allocation, we show that the strong price of Pareto optimality is quadratic. Our result indicates that while Pareto optimality is sometimes referred to as `efficiency', it does not necessarily fare well if efficiency is measured in terms of social welfare.

\begin{lemma}
\label{lem:PO-welfare}
For any instance, every Pareto optimal allocation has social welfare at least $1/n$, and this bound is tight.
\end{lemma}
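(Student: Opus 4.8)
The plan is to prove both directions. For the lower bound---that any Pareto optimal allocation $\mathcal{M}=(M_1,\dots,M_n)$ has $\SW(\mathcal{M})\ge 1/n$---I would argue by contradiction, reusing the cyclic-reassignment idea from the proof of Lemma~\ref{lem:MNW-leximin-welfare} but with one crucial change: the reassignment there merely improves the multiset of utilities, whereas here it must be a genuine Pareto improvement. Suppose $\SW(\mathcal{M})<1/n$. Since the utilities are nonnegative and sum to less than $1/n$, every agent $i$ has $u_i(M_i)<1/n$. Fix any agent $i$: because $\sum_{k=1}^n u_i(M_k)=u_i(M)=1$, if $u_i(M_k)\le u_i(M_i)$ held for all $k$ we would get $1\le n\cdot u_i(M_i)<1$, a contradiction; hence there is an agent $\phi(i)\ne i$ with $u_i(M_{\phi(i)})>u_i(M_i)$.

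Iterating $\phi$ from any agent, the orbit must eventually repeat (finitely many agents), and taking the first repetition yields a directed cycle $j_0\to j_1\to\cdots\to j_{t-1}\to j_0$ of distinct agents with $t\ge 2$ (using $\phi(i)\ne i$) such that $u_{j_s}(M_{j_{s+1}})>u_{j_s}(M_{j_s})$ for each $s$, indices taken mod $t$. Reassigning bundles cyclically---giving $M_{j_{s+1}}$ to agent $j_s$ for each $s$ and leaving every agent outside the cycle untouched---produces a valid allocation in which each $j_s$ is strictly better off and no one is worse off. This Pareto improvement contradicts Pareto optimality, so $\SW(\mathcal{M})\ge 1/n$.

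For tightness I would exhibit, for each $n$, an instance with a Pareto optimal allocation of social welfare arbitrarily close to $1/n$. Take $m=n$ and a small $\delta>0$. Let agent $1$ value good $1$ slightly above $1/n$ and every other good slightly below ($u_1(1)=1/n+(n-1)\delta$ and $u_1(j)=1/n-\delta$ otherwise), so good $1$ is her strict favorite yet worth only $\approx 1/n$. For $i\ge 2$, let agent $i$ put almost all her value on good $i-1$, a little on good $i$, and a negligible amount on the rest, with strict preferences $u_i(i-1)>u_i(i)>u_i(j)$ for $j\notin\{i-1,i\}$, so her held good $i$ is worth $\approx 0$. For the allocation $M_i=\{i\}$, the welfare is $u_1(1)+\sum_{i\ge 2}u_i(i)\to 1/n$. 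I would then check Pareto optimality: the ``wants'' form the acyclic chain $n\to n-1\to\cdots\to 2\to 1$ with agent $1$ content (good $1$ is her favorite), so any reallocation benefiting some agent $i\ge 2$ must pry good $i-1$ away from agent $i-1$, who values good $i-1$ strictly more than any good she could receive in return (everything except good $i-2$, which is itself needed further up the chain); hence no Pareto improvement exists. (This same instance, whose optimal welfare tends to $n$, also supplies the $\Omega(n^2)$ lower bound for the strong price of Pareto optimality.)

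The main obstacle is the subtlety in the lower bound. The ``obvious'' move---handing a highly valued bundle $M_k$ to an agent who prefers it---is \emph{not} a Pareto improvement, since the current owner of $M_k$ is hurt; the remedy is to define the successor relation $\phi$ so that it is each agent's \emph{own} utility (not the successor's) that strictly increases, which is precisely what makes the cyclic rotation Pareto-dominate $\mathcal{M}$. A secondary care point is verifying that the tightness instance is genuinely Pareto optimal, for which the strict ``want-chain without a cycle'' structure, together with agent $1$ holding her favorite, is essential.
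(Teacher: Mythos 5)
Your proof is correct and takes essentially the same route as the paper: for the lower bound, the paper likewise assumes every agent's utility is below $1/n$, deduces that every agent envies someone (your successor map $\phi$ is just the envy graph), and rotates bundles along a cycle to obtain a Pareto improvement; for tightness, your instance is a minor perturbation of the chain instance the paper reuses from Theorem~\ref{thm:PO-PoA}. The only difference is presentational, and your verification of Pareto optimality for the tightness instance is at the same level of detail as the paper's (both implicitly rely on a counting argument to rule out agent $1$ trading good $1$ for several lesser goods).
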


\begin{proof}
To establish the bound, it suffices to show that in any Pareto optimal allocation, some agent receives utility at least $1/n$. Suppose that this is not the case. Since the utility of each agent for the entire set of goods is $1$, every agent envies at least one other agent. This implies that the \emph{envy graph}, which has the $n$ agents as its vertices and in which there is a directed edge from one agent to another if the former agent envies the latter, contains a directed cycle. By giving agent $j$'s bundle to agent $i$ for every edge $i\rightarrow j$ in the cycle, we obtain a Pareto improvement, a contradiction.

The tightness of the bound follows from the instance in Theorem~\ref{thm:PO-PoA}.
\end{proof}

\begin{theorem}
\label{thm:PO-PoA}
The strong price of Pareto optimality is $\Theta(n^2)$.
\end{theorem}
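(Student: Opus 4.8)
The plan is to prove the two directions separately, since the price of Pareto optimality is trivially $1$ while the strong price requires showing both an $O(n^2)$ upper bound and an $\Omega(n^2)$ lower bound.

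\medskip

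\noindent\textbf{Upper bound.} The upper bound of $n^2$ is essentially immediate from Lemma~\ref{lem:PO-welfare}: the optimal social welfare is at most $n$ (each agent has utility at most $1$), while every Pareto optimal allocation has social welfare at least $1/n$. Hence the strong price of Pareto optimality is at most $\frac{n}{1/n} = n^2$. No further work is needed here.

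\medskip

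\noindent\textbf{Lower bound.} For the lower bound, I would exhibit an instance admitting a Pareto optimal allocation whose welfare is $\Theta(1/n)$ while $\OPT$ is $\Theta(n)$. A natural candidate, which also serves as the tightness example for Lemma~\ref{lem:PO-welfare}, is to have all agents value essentially the same items but at different granularities, together with one ``spreader'' agent whose utility is uniform over many goods. Concretely, take $m$ large and divisible by $n$, let agent $1$ have $u_1(j) = 1/m$ for all $j$, and for $i = 2, \dots, n$ let agent $i$ value only a nested block of goods: e.g.\ $u_i(j) = 1/(\text{block size})$ for $j$ in block $i$ and $0$ otherwise, with blocks chosen so that a ``rotated'' assignment (give agent $i$ a good that is worthless to her but part of agent $(i{+}1)$'s block, in a cyclic fashion) is forced to be Pareto optimal. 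The key point to verify is that the bad allocation is genuinely Pareto optimal: one must check that no subset of agents can cyclically exchange bundles to make everyone weakly better off and someone strictly better off. Because the uniform agent values everything equally, handing her any fixed number of goods gives her a fixed utility, so she cannot be strictly improved without strictly harming someone; and the other agents, receiving bundles worthless to them, can only be improved by taking goods currently held by agent $1$ or by another agent, which necessarily hurts that agent. Establishing this Pareto-optimality rigorously — ruling out all Pareto-improving reallocations, not just welfare-improving ones — is the main obstacle, and I expect it to require a short case analysis on which agents participate in a hypothetical improving exchange. Once that is in place, the bad allocation has welfare $O(1/n)$ (only agent $1$'s sliver of goods contributes, or even welfare exactly $1/n$ with the right construction) while the optimal allocation gives welfare converging to $n$ as $m \to \infty$, yielding a ratio converging to $n^2$.

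\medskip

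Combining the two bounds gives that the strong price of Pareto optimality is $\Theta(n^2)$, and in fact the upper bound $n^2$ is attained in the limit, so the bound is asymptotically tight. I would also remark that this simultaneously proves the tightness claim deferred in Lemma~\ref{lem:PO-welfare}, namely that the $1/n$ welfare guarantee for Pareto optimal allocations cannot be improved.
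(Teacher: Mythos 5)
Your upper bound is exactly the paper's argument and is fine. The lower bound, however, has a genuine gap: the construction is not pinned down, and the justification you offer for Pareto optimality of the ``bad'' allocation is incorrect. The problem is precisely the uniform agent: because agent $1$ values all goods equally, she can participate in an \emph{equal-size swap} without being harmed. So if agent $i\geq 2$ holds a good worth $0$ to her while agent $1$ holds a good $g$ with $u_i(g)>0$, exchanging the two goods leaves agent $1$ indifferent and makes agent $i$ strictly better off---a Pareto improvement. Likewise, if some agent $j\neq 1$ holds a good that agent $i$ values but $j$ does not, a free transfer is a Pareto improvement; and if the bundles of agents $2,\dots,n$ are arranged cyclically (each holding the next agent's valued block), the cyclic rotation of those bundles is itself a Pareto improvement. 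Your claim that improving agent $i$ ``necessarily hurts'' the agent she takes from fails in exactly these cases, which are the cases your construction creates. More fundamentally, an allocation in which agents $2,\dots,n$ all get utility $0$ (while each values some good held by an indifferent or non-valuing agent) essentially cannot be Pareto optimal, so aiming for welfare exactly $1/n$ concentrated on agent $1$ is the wrong target.

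The paper's construction avoids this by installing small positive utilities as ``locks'': take $m=n$, let agent $1$ value good $1$ at $\frac{1}{n}+\epsilon$ and every other good at $\frac{1}{n}-\frac{\epsilon}{n-1}$, and for $i\geq 2$ let agent $i$ value good $i-1$ at $1-\epsilon$ and good $i$ at $\epsilon$. The identity allocation (good $i$ to agent $i$) has welfare $\frac{1}{n}+n\epsilon$, and it is Pareto optimal because no agent may drop to zero goods (each currently has positive utility), so everyone keeps exactly one good; agent $1$ then needs good $1$ to avoid losing utility, which forces agent $2$ to keep good $2$, and so on down the chain. Meanwhile the shifted allocation has welfare about $n-1+\frac{1}{n}$, giving the $\Omega(n^2)$ ratio. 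If you want to salvage your approach, you need an analogous mechanism that gives every agent a strictly positive stake in her own bundle; without it the case analysis you defer will not close.
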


\begin{proof}
\emph{Upper bound}: Consider an arbitrary instance. Since every agent receives utility at most 1, the optimal social welfare is at most $n$. On the other hand, by Lemma~\ref{lem:PO-welfare}, every Pareto optimal allocation has social welfare at least $1/n$. The conclusion follows.

\medskip

\noindent\emph{Lower bound}: Assume that $n\geq 2$. Let $m=n$, $0<\epsilon<1/n$, and assume that the utilities are as follows:
\begin{itemize}
    \item $u_1(1)=\frac{1}{n}+\epsilon$ and $u_1(j)=\frac{1}{n}-\frac{\epsilon}{n-1}$ otherwise.
    \item For $i=2,\dots,n$: $u_i(i-1)=1-\epsilon$, $u_i(i)=\epsilon$, and $u_i(j)=0$ otherwise.
\end{itemize}

Consider the allocation that assigns good $i-1$ to agent $i$ for $i=2,\dots,n$, and good $n$ to agent 1. The welfare of this allocation is $(n-1)(1-\epsilon)+\left(\frac{1}{n}-\frac{\epsilon}{n-1}\right) = n-1+\frac{1}{n}-\left(n-1+\frac{1}{n-1}\right)\epsilon$. On the other hand, the allocation that assigns good $i$ to agent $i$ for $i=1,\dots,n$ is Pareto optimal. This is because in any Pareto improvement, agent 1 must receive good 1, and it follows that agent $i$ must receive good $i$ for every $i$. The social welfare of this allocation is $\left(\frac{1}{n}+\epsilon\right)+(n-1)\epsilon=\frac{1}{n}+n\epsilon$. Taking $\epsilon\rightarrow 0$ yields the desired result.
\end{proof}

We also show an exact bound for the case of two agents.

\begin{theorem}
\label{thm:PO-PoA-2}
For $n=2$, the strong price of Pareto optimality is $3$.
\end{theorem}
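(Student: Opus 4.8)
The plan is to establish matching upper and lower bounds of $3$ on the strong price of Pareto optimality for two agents, mirroring the structure of the general-$n$ proof in Theorem~\ref{thm:PO-PoA} but sharpening the constants.

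\medskip

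\noindent\emph{Upper bound.} First I would show that any Pareto optimal allocation for two agents has social welfare at least $2/3$. Consider a PO allocation in which agent~$1$ receives utility $a$ and agent~$2$ receives utility $b$, and suppose toward a contradiction that $a+b < 2/3$. Since utilities are normalized, if agent~$1$ were given the complement bundle she would get $1-b$ and agent~$2$ would get $1-a$; this swap is a Pareto improvement unless $1-b \le a$ or $1-a \le b$, i.e.\ unless $a+b \ge 1$ — contradicting $a+b<2/3<1$. Hence no PO allocation has welfare below $2/3$; combined with $\OPT(I)\le 2$ this gives a strong price of at most $2/\frac{2}{3} = 3$. (One must be a little careful that the swapped allocation strictly Pareto-dominates rather than merely weakly; the strict inequalities $1-b>a$ and $1-a>b$ both hold when $a+b<1$, so at least one coordinate strictly increases — in fact both do — and the argument goes through.)

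\medskip

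\noindent\emph{Lower bound.} For the matching lower bound I would exhibit an instance with $m=2$ and a small $\epsilon>0$: let $u_1(1)=1/2$, $u_1(2)=1/2$, and $u_2(1)=\epsilon$, $u_2(2)=1-\epsilon$, or some similar asymmetric two-good configuration. The allocation giving good~$1$ to agent~$2$ and good~$2$ to agent~$1$ has welfare roughly $1/2 + \epsilon$, and it is Pareto optimal because any improvement would have to give agent~$1$ more than $1/2$, forcing her to receive both goods, which drops agent~$2$ to $0$ — not an improvement. Meanwhile the welfare-optimal allocation (both goods to agent~$1$, or good~$1$ to agent~$1$ and good~$2$ to agent~$2$) has welfare close to $3/2$. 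Taking $\epsilon\to 0$ yields a ratio approaching $3$. I would double-check that the candidate PO allocation is genuinely PO and that the optimal welfare is what I claim; it may be cleaner to use three goods as in the general construction (scaled down to $n=2$) to make the ``Pareto improvement forces a specific assignment'' argument airtight.

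\medskip

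\noindent The main obstacle is getting the lower-bound instance exactly right so that the bad PO allocation has welfare tending to $1/2$ (not more) while remaining Pareto optimal, and simultaneously keeping $\OPT$ close to $3/2$; the upper-bound direction is routine once the $2/3$ welfare lower bound for two-agent PO allocations is observed. A secondary subtlety is the tie-breaking/strictness issue in the swap argument, but as noted it resolves cleanly whenever $a+b<1$.
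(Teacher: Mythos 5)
There is a genuine gap in your upper bound, and it is not just a matter of tightening constants. First, the swap computation mixes up the utility functions: if agent~$1$ holds a bundle worth $a$ to her and agent~$2$ holds the complement worth $b$ to him, then after swapping agent~$1$ gets $1-a$ (her value for the complement of her own bundle) and agent~$2$ gets $1-b$, not $1-b$ and $1-a$ respectively. Done correctly, the swap argument only shows that a PO allocation must have $\max\{a,b\}\geq 1/2$, hence welfare at least $1/2$ --- which is exactly Lemma~\ref{lem:PO-welfare} for $n=2$, and that bound is \emph{tight}: the paper's own lower-bound instance has a PO allocation with welfare tending to $1/2$. So your claimed universal bound of $2/3$ is false, and more fundamentally, no argument of the form ``$\OPT\leq 2$ and every PO allocation has welfare at least $c$'' can give the ratio $3$, since $c$ cannot exceed $1/2$. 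The paper instead couples the two quantities: letting $y$ be the smaller utility in an optimal allocation, every PO allocation has welfare at least $y$ (else the optimal allocation is itself a Pareto improvement), hence at least $\max\{y,1/2\}$, while $\OPT\leq 1+y$; the ratio $(1+y)/\max\{y,1/2\}$ is then at most $3$ in both cases $y\leq 1/2$ and $y>1/2$. This instance-dependent step is the missing idea.

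Your lower-bound instance also fails as stated: with $u_1=(1/2,1/2)$ and $u_2=(\epsilon,1-\epsilon)$, the allocation giving good~$1$ to agent~$2$ and good~$2$ to agent~$1$ is Pareto-dominated by the allocation giving good~$1$ to agent~$1$ and good~$2$ to agent~$2$ (agent~$1$ stays at $1/2$, agent~$2$ jumps from $\epsilon$ to $1-\epsilon$). You anticipated this risk; the fix is precisely the paper's construction specialized to $n=2$, namely $u_1=(1/2+\epsilon,\,1/2-\epsilon)$ and $u_2=(1-\epsilon,\,\epsilon)$, where agent~$1$'s strict preference for good~$1$ forces any Pareto improvement of the diagonal allocation to hand her good~$1$, pinning the allocation down and making it PO with welfare $1/2+2\epsilon$ against $\OPT=3/2-2\epsilon$.
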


\begin{proof}
The instance in Theorem~\ref{thm:PO-PoA} shows that the strong price of Pareto optimality is at least $3$. To show that this is tight, consider an arbitrary instance and an optimal allocation in this instance. Assume that the two agents receive utility $x$ and $y$ in this allocation, where $x\geq y$. In any Pareto optimal allocation, at least one agent must receive utility at least $y$; otherwise the optimal allocation is a Pareto improvement. In combination with Lemma~\ref{lem:PO-welfare}, this implies that the social welfare of every Pareto optimal allocation is at least $\max\{y,1/2\}$.

The strong price of Pareto optimality is therefore at most $\frac{x+y}{\max\{y,1/2\}}\leq\frac{1+y}{\max\{y,1/2\}}$. If $y\leq 1/2$, this quantity is at most $2(1+y)\leq 3$. On the other hand, if $y>1/2$, this quantity is at most $\frac{1+y}{y}=\frac{1}{y}+1<3$. The conclusion follows.
\end{proof}

\section{Conclusion and Future Work}

In this paper, we study the price of fairness for indivisible goods using several fairness notions that can always be satisfied. For most cases, we exhibit tight or asymptotically tight bounds on the worst-case efficiency loss that can occur due to fairness constraints.
Interestingly, both the round-robin and MNW allocations, which are EF1, can have social welfare a linear factor away from the optimum---however, round-robin performs significantly better than this worst-case bound as long as the number of goods is not huge compared to the number of agents.
The linear bound that we obtain for MNW stands in contrast to \citet{BertsimasFaTr11}'s result in the divisible goods setting, where the price of MNW is $\Theta(\sqrt{n})$.

A potential direction for future work is to perform similar analyses but using egalitarian welfare instead of utilitarian welfare as the benchmark. This has been done, for example, in the context of contiguous allocations \citep{AumannDo15,Suksompong19}. 
One could also study the price of fairness for the \emph{chore division} problem, where chores refer to items that yield negative utility for the agents. 
Indeed, almost all of the notions that we consider in the goods setting have direct analogs in the chore setting, and it would be interesting to see whether the corresponding bounds in the two settings turn out to be similar as well.

\section*{Acknowledgments} 

This work is partially supported by NSF Award CCF-1815434, by the European Research Council (ERC) under grant number 639945
(ACCORD), and by an NUS Start-up Grant.
We are grateful to the reviewers of IJCAI 2019 and Theory of Computing Systems for many helpful comments, and to Ioannis Caragiannis for pointing us to the price of envy-freeness result by \citet{BertsimasFaTr11}.

\bibliographystyle{plainnat}
\bibliography{ijcai19}

\end{document}